\DeclareFontFamily{OT1}{pzc}{}
\DeclareFontShape{OT1}{pzc}{m}{it}{<-> s * [1.10] pzcmi7t}{}
\DeclareMathAlphabet{\mathpzc}{OT1}{pzc}{m}{it}
\newtheorem{theorem}{Theorem}[section]
\newtheorem{lemma}[theorem]{Lemma}
\newtheorem{proposition}[theorem]{Proposition}
\newtheorem{assumption}[theorem]{Assumption}
\newtheorem{definition}[theorem]{Definition}
\newtheorem{remark}[theorem]{Remark}
\providecommand{\N}{\mathbb{N}}
\providecommand{\R}{\mathbb{R}}
\providecommand{\SO}{\mathbf{SO}}
\providecommand{\SL}{\mathbf{SL}}
\providecommand{\SE}{\mathbf{SE}}
\providecommand{\grpG}{\mathbf{G}}
\providecommand{\gothg}{\mathfrak{g}}
\providecommand{\gothL}{\mathfrak{L}}
\providecommand{\gothR}{\mathfrak{R}}
\providecommand{\gothX}{\mathfrak{X}} 
\providecommand{\Sph}{\mathrm{S}}
\providecommand{\calB}{\mathcal{B}}
\providecommand{\calF}{\mathcal{F}}
\providecommand{\calG}{\mathcal{G}}
\providecommand{\calM}{\mathcal{M}}
\providecommand{\calN}{\mathcal{N}}
\providecommand{\calS}{\mathcal{S}}
\providecommand{\calV}{\mathcal{V}}
\providecommand{\vecV}{\mathbb{V}}
\providecommand{\calV}{\mathcal{V}}
\providecommand{\Id}{I} 
\providecommand{\calf}{\mathpzc{f}} 
\DeclareMathOperator{\spn}{span}
\DeclareMathOperator{\stab}{stab}
\DeclareMathOperator{\Ad}{Ad}
\DeclareMathOperator{\image}{im}
\providecommand{\id}{\mathrm{id}} 
\providecommand{\tT}{\mathrm{T}} 
\providecommand{\td}{\mathrm{d}}
\providecommand{\tD}{\mathrm{D}}
\providecommand{\ddt}{\frac{\td}{\td t}}
\providecommand{\Fr}[2]{\left. \mathrm{D}_{#1} \right|_{#2}}
\providecommand{\mr}[1]{{#1}^\circ} 
\providecommand{\ob}[1]{\overline{#1}} 
\providecommand{\scirc}{%
    \hbox{\fontfamily{\rmdefault}\fontsize{0.4\dimexpr(\f@size pt)}{0}\selectfont{\raisebox{-0.52ex}[0ex][-0.52ex]{$\circ$}}}}
\mathchardef\mhyphen="2D
\providecommand{\etal}{\textit{et al.}~}
\begin{document}

\title{Equivariant Systems Theory and Observer Design}
\headertitle{Equivariant Systems Theory and Observer Design}

\author{
\href{https://orcid.org/0000-0002-7803-2868}{\includegraphics[scale=0.06]{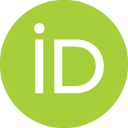}\hspace{1mm}
Robert Mahony}\\
	Department of Electrical, Energy and Materials Engineering\\
	Australian National University\\
    ACT, 2601, Australia \\
	\texttt{Robert.Mahony@anu.edu.au} \\
	\And	
\href{https://orcid.org/0000-0001-7383-5453}{\includegraphics[scale=0.06]{orcid.png}\hspace{1mm}
Tarek Hamel} \\
    $^2$ I3S-CNRS, \\
    University C\^ote d'Azur and Insitut Universitaire de France \\
    France\\
	\texttt{thamel@i3s.unice.fr} \\
	\And	
\href{https://orcid.org/0000-0002-5881-1063}{\includegraphics[scale=0.06]{orcid.png}\hspace{1mm}
    Jochen Trumpf} \\
	Department of Electrical, Energy and Materials Engineering\\
	Australian National University\\
    ACT, 2601, Australia \\
	\texttt{Jochen.Trumpf@anu.edu.au} \\
}
\maketitle

\begin{abstract}
A wide range of system models in modern robotics and avionics applications admit natural symmetries.
Such systems are termed equivariant and the structure provided by the symmetry is a powerful tool in the design of observers.
Significant progress has been made in the last ten years in the design of filters and observers for attitude and pose estimation, tracking of homographies, and velocity aided attitude estimation, by exploiting their inherent Lie-group state-space structure.
However, little work has been done for systems on homogeneous spaces, that is systems on manifolds on which a Lie-group acts rather than systems on the Lie-group itself.
Recent research in robotic vision has discovered symmetries and equivariant structure on homogeneous spaces for a host of problems including the key problems of visual odometry and visual simultaneous localisation and mapping.
These discoveries motivate a deeper look at the structure of equivariant systems on homogeneous spaces.
This paper provides a comprehensive development of the foundation theory required to undertake observer and filter design for such systems.
\end{abstract}

\keywords{
equivariant system, observer, Lie-group
}

\section{Introduction}\label{sec:intro}

The design of global observers for mechanical systems is a core problem in the fields of control, robotics, and autonomous systems.
Such systems are almost always nonlinear and classical observer design methodologies in local coordinates, typically based on Extended Kalman Filters (EKF) \cite{anderson1979} or Unscented Kalman Filters (UKF) \cite{JulierUKF} suffer from robustness and performance limitations
\cite{2003_LaViolaJr_acc,2016_Barrau_arxive,2017_Barrau_tac}.
A key observation is that the nonlinearity of many systems of interest is due to the structure of the state space, often a Lie-group or homogeneous manifold, rather than complexity of the kinematics.
There is a rich history of work in which authors exploit the geometric structure of the state-space for observer and filter design for these systems.
Early work by Salcudean \cite{1991_salcudean_TAC} used the geometric structure of the quaternion group for  attitude estimation of a satellite.
Thienel \etal  \cite{thienel2003} added an analysis of observability and  bias estimation to the work of Salcudean \cite{1991_salcudean_TAC}.
Markley \cite{Markley2003} derived the so-called Multiplicative Extended Kalman Filter (MEKF) \cite{martin_salaun2010}.
Aghannan \etal \cite{2003_Aghannan_TAC,2004_Maithripala_acc} applied the tools from geometric mechanical systems and invariance.
These ideas were further developed by Bonnabel \etal \cite{2007_Bonnabel_cdc,2008_Bonnabel_TAC} for kinematic systems leading to the Invariant Extended Kalman Filter (IEKF) \cite{2017_Barrau_tac}.
In parallel Mahony \etal \cite{hamel2006,2008_Mahony_tac} developed the first non-linear observer for the rotation matrices with almost global asymptotic stability.
From this foundational work different observers have been derived exploiting the special orthogonal group $\SO(3)$ for attitude estimation \cite{2011_Madgwick,trumpf2012,2012_Grip,2014_Sanyal,2017_Berkane_Tayebi_Tac},  the special Euclidean group $\SE(3)$ for pose estimation \cite{2011_Hua_SE(3),2013_Bras,2015_Hua}, and the special linear group $\SL(3)$ for homography estimation \cite{2011_Hamel,2019_Hua,2020_Hua}.
A novel Lie-group for the velocity-aided attitude estimation problem was proposed in \cite{bonnabel2006} that allows second order kinematics in translation to be dealt with in the same structure as first order kinematics in attitude.
More recently, Lie-groups for the full second order kinematics on $\tT\SO(3)$ \cite{2019_Ng_cdc} and $\tT\SE(3)$ \cite{2020_Ng_ifac} have also been studied.

Leading a growing interest within the systems and control community in the Simultaneous Localisation and Mapping (SLAM) problem, Bonnabel \etal \cite{2016_Barrau_arxive} proposed a novel Lie group $\SE_n(m)$ to design an invariant Kalman Filter for the SLAM problem.
Parallel work by Mahony \etal \cite{2017_Mahony_cdc} proposed the same group structure along with a novel quotient manifold structure for the state-space of the SLAM problem.
Work by Zlotnik \etal \cite{2018_forbes_TAC} derived a geometrically motivated observer for the SLAM problem that includes estimation of bias in linear and angular velocity inputs.
Lourenco \etal \cite{2016_LouGueBatOliSil,2018_Lourenco_RAS} proposed an observer for the landmark points, or structure, separate from the robot pose, using landmark depth and bearing as separate components of the observer.
A new symmetry for SLAM was proposed in \cite{2019_vanGoor_cdc} that had the additional advantage that visual measurements are equivariant, and has led to almost globally stable observers \cite{2020_vanGoor_ifac} for the visual SLAM problem.
Of course, visual odometry and visual SLAM have a long history in the robotics community and come with a rich existing literature, see \cite{JonesSoato2011,2016_Cadena_TRO} and references therein.
However, the recent filters developed within the systems and control community that exploit the underlying symmetry for these problems show better consistency \cite{2016_Barrau_arxive,brossardInvariantKalmanFiltering2018} and larger basins of attraction
\cite{2016_LouGueBatOliSil,2018_Lourenco_RAS,2020_vanGoor_ifac} than classical filters.

The systems and observers discussed above are, apart from a few recent papers by the authors \cite{2017_Mahony_cdc,2020_Vangoor_cdc,2020_vanGoor_ifac,2019_vanGoor_cdc},
posed on matrix Lie-group state spaces.
Exploiting symmetry and invariance for observer design on smooth homogeneous spaces, that is manifolds that submit to a smooth group action, is much rarer in the literature.
An observer design on $\Sph^2$, an homogeneous space under rotation action by $\SO(3)$, was proposed in \cite{Metni2005}.
This observer was lifted up to $\SO(3)$ \cite{2008_Mahony_tac} to obtain the complementary filter for attitude that has been a mainstay of the aerial robotics industry.
The original invariant filter \cite{2008_Bonnabel_TAC} work was posed on homogeneous spaces although the examples considered at the time were posed on Lie-groups and recent developments \cite{2017_Barrau_tac} have been targeted at systems with matrix Lie group state space.
The underlying conceptual approach of defining a `lifted system' evolving on a symmetry group was formalised in \cite{RM_2013_Mahony_nolcos} but has not been exploited due to a lack of foundation theory and a paucity of real-world examples.
Recently, the work of Mahony \etal \cite{2017_Mahony_cdc} showed that the SLAM problem carries the structure of a homogeneous space, a perspective that underlies recent work by the authors
\cite{2020_Vangoor_cdc,2020_vanGoor_ifac,2019_vanGoor_cdc}.
Although these examples exist, there is no unified framework in the literature for studying observer design for systems on homogeneous spaces.

This paper contributes to the new field of \emph{equivariant systems theory} and develops theory targeted at observer design.
The paper proposes a design methodology \S\ref{sub:design_outline} that can be applied to any system for which the state space is a homogeneous space.
The potential application of the proposed methodology includes all examples discussed above as well as a vast collection of new problems in modern robotics and avionics that have yet to be defined.
The key contributions of the paper are:
\begin{itemize}
\item
A careful development of kinematics for physical systems leading to a global input-state system model derived from explicit velocity output structure.

\item
A formal development of equivariance of input-state systems along with an understanding of input actions and the concept of input extensions.

\item
A theory of equivariant systems, including the concept of equivariant system lift and proof that any equivariant system admits such a lift.

\item
A discussion of invariant errors for systems with homogeneous state that  motivates an observer/filter architecture with the observer state posed on the symmetry group.
\end{itemize}
We stop short of proposing a specific observer/filter design, halting the development at the point where the error dynamics are derived.
We stop here for several reasons, not the least of which is that the space limitations of the present venue had been reached.
It is also the case that there is no best observer design methodology for all systems, and proposing a specific design in this paper would detract from its  goal of providing a foundation theory for all observer design methodologies that exploit symmetry.
To save space we have also made the difficult decision not to include examples in this paper and rather refer the reader to our recent works
\cite{2017_Mahony_cdc,2019_Ng_cdc,2020_Vangoor_cdc,2020_vanGoor_ifac,2019_vanGoor_cdc,2020_Ng_ifac}.
These examples, especially those involving the visual SLAM problem, use the analysis framework and approach formalised in the present paper.

The paper has six sections including the introduction.
We begin with an overview of the design methodology proposed \S\ref{sub:design_outline}.
This provides both a detailed outline of the structure of the paper as well as a summary of the approach.
Following this, Section~\ref{sec:notation} introduces the notation.
Section~\ref{sec:kinematic_systems} provides a first principles discussion of what is a kinematic system and how to model velocity outputs to build an input-state system model for observer design.
A reader who is happy to accept a system $\dot{\xi} = f(\xi,v)$ with measured velocity $v$ inputs as the starting position for observer design could omit \S\ref{sec:kinematic_systems}.
However, we have found that a deep understanding of the modelling process, and in particular the role of velocity measurement in defining the input-state model, is enormously beneficial in practice.
Section~\ref{sec:equivariant_representation} introduces the concepts of symmetry and equivariance and does the heavy lifting for the paper in defining equivariant kinematics, extended inputs, system lifts and showing that an equivariant system lift always exists.
The final section \S\ref{sec:Observer_design} proposes an observer  architecture based on the lifted system, discusses invariant errors and derives the error dynamics.
We argue that the invariant error provides \emph{the} key tool in observer design for such systems.
We show that restricting to invariant errors leads to the natural formulation of the observer on the symmetry group (Theorem \ref{th:invariant_error}) and requires the concepts of equivariance and the input extension theory (Theorem \ref{th:equivariant_extension}) and equivariant lift (Theorem \ref{th:equivariant_lift_existance}) in order to simplify the error dynamics and provide tractable error-state kinematics for the observer design problem.

\subsection{Observer design program}\label{sub:design_outline}

In this section we present a high level outline of the proposed observer design methodology that will initially act as  road-map for the results in the body of the paper and later as a summary of the approach.
To keep this section concise we use the notation and terminology developed later in the paper without explanation for the moment.

The problem considered is the design of a state observer for a continuous-time kinematic system evolving on a homogeneous space $\calM$.
The approach taken includes choosing and modelling sensors as part of the observer design problem.
As such, we do not start with an ordinary differential equation system model in the classical sense since such a model presupposes the structure of the velocity inputs and that in turn assumes availability of certain velocity measurements.
Instead, we begin with a (kinematic) behaviour $\calB$ on the time interval $[0,\infty)$ over a signal space $\tT \calM$, the tangent bundle of the smooth manifold $\calM$ (Def.~\ref{def:kinematic_system}) and then use the choice of velocity measurements to derive an input-state model.
The steps of the proposed design methodology are:
\begin{enumerate}
\item\label{item:analysis_state} Choose a sensor suite to provide a complete (Def.~\ref{def:complete_output_g}) velocity output $g : \tT\calM \to \vecV$ (Def.~\ref{def:velocity_output}) for an input vector space $\vecV$  and a configuration output $h : \calM \to \calN$ (Def.~\ref{def:config_output}) for an output manifold $\calN$.

\item\label{item:analysis_kinematics} Choose a linear system function $f : \vecV \to \calM$ (Def.~\ref{def:linear_system_funtion}) that is compatible with $g$ (Def.~\ref{def:systemfunction_compatible}) to obtain a kinematic model (Def.~\ref{def:kinematic_model}).

\item\label{item:analysis_symmetry} Choose a symmetry group $\grpG$ and transitive group action $\phi : \grpG \times \calM \rightarrow \calM$ (Def.~\ref{def:symmetry_group}).

\item\label{item:analysis_input_sym} Extend the input space via equivariance ($\calV \supset\vecV $ Def.~\ref{def:input_ext}) and define an input group action $\psi : \grpG \times \calV \rightarrow \calV$ and extended equivariant system function $\calf : \calV \to \gothX(\calM)$ (Th.~\ref{th:equivariant_extension}).

\item\label{item:analysis_lift} Construct an equivariant lift function $\Lambda : \calM \times \calV \rightarrow \gothg$ (Def.~\ref{def:equivarian_lift}) such that $\td \phi_\xi \Lambda(\xi,v) = \calf(\xi,v)$ and $\Ad_X \Lambda(\phi_X(\xi),\psi_X(v)) =  \Lambda(\xi,v)$ (Th.~\ref{th:equivariant_lift_existance}).

\item\label{item:observer_architecture} Choose an origin $\mr{\xi} \in \calM$ and define the observer system  (Def.~\ref{def:observer_architecture})
\begin{align}
\dot{\hat{X}} & = \td L_{\hat{X}} \Lambda(\phi_{\hat{X}}(\mr{\xi}),v) + \td R_{\hat{X}} \Delta_t(\hat{X},y) \label{eq:plan_obs_system}\\
\hat{\xi} & = \phi_{\hat{X}} (\mr{\xi}) \label{eq:plan_state_est}
\end{align}
for $y = h(\xi)$ and an innovation map $\Delta_t : \grpG \times \calN \to \gothg$ that remains to be designed.

\item\label{item:error_dynamics}
Define the intrinsic error $e := \phi(\hat{X}^{-1},\xi)$ (Def.~\ref{def:error}) and compute the error dynamics \eqref{eq:equivariant_error_dynamics}
\begin{align}
\dot{e} = \td \phi_e \left( \Lambda(e,\psi_{\hat{X}^{-1}}(v)) - \Lambda(\mr{\xi},\psi_{\hat{X}^{-1}}(v)) \right)
- \td \phi_e \Delta_t( \hat{X},h(\phi_{\hat{X}}(e))
\label{eq:plan_error_dyn}
\end{align}

\item\label{item:observer_design}
Design an innovation $\Delta_t$ to make \eqref{eq:plan_error_dyn} converge to the origin $e \to \mr{\xi}$.
\end{enumerate}

The innovation designed in Step \ref{item:observer_design} is implemented in \eqref{eq:plan_obs_system} and the state estimate generated is  \eqref{eq:plan_state_est}.

Solutions to Steps \ref{item:analysis_state} and \ref{item:analysis_kinematics} are developed in Section \ref{sec:kinematic_systems}.
Solutions to Steps \ref{item:analysis_symmetry}, \ref{item:analysis_input_sym} and \ref{item:analysis_lift} are covered in Section \ref{sec:equivariant_representation}.
The observer framework, Steps \ref{item:observer_architecture} and \ref{item:error_dynamics}, is developed in Section \ref{sec:Observer_design}.
Step \ref{item:observer_design} is discussed in Section \ref{sec:Observer_design}, although as mentioned  earlier, this paper does not propose a particular choice for the innovation function.

\section{Notation and Preliminaries}\label{sec:notation}

Let $\calM$ and $\calN$ be smooth manifolds.
The tangent space at a point $\xi \in \calM$ is denoted $\tT_\xi \calM$.
A tangent vector $\eta_\xi \in \tT_\xi \calM$ is written with subscript identifying which specific tangent space $\tT_\xi \calM$ it lies in.
The tangent bundle is the collection $\tT \calM = \{\eta_\xi \;|\; \eta_\xi \in \tT_\xi \calM, \xi \in \calM\}$ equipped with the induced differential structure.
The bundle map $\pi : \tT \calM \rightarrow \calM$, $\pi (\eta_\xi) := \xi$, projects tangent vectors down to their base point.

For a smooth map $h : \calM \rightarrow \calN$ we denote the Fr\'{e}chet derivative with respect to a variable $\zeta \in \calM$, evaluated at $\xi \in \calM$ in direction $\eta_\xi \in \tT_\xi \calM$ by
$\Fr{\zeta}{\xi} h(\zeta)[\eta_\xi]$.
The differential of a smooth map maps between tangent bundles, is denoted $\td h : \tT\calM \rightarrow T\calN$, and is evaluated pointwise by
\[
\td h \eta_\xi = \Fr{\zeta}{\xi}h(\zeta)[\eta_\xi].
\]

A smooth vector field on $\calM$ is a smooth map $f : \calM \rightarrow \tT \calM$ such that $f(\xi) \in \tT_\xi \calM$.
The set of all smooth vector fields is denoted $\gothX(\calM)$ and is a linear (infinite dimensional) vector space over the field $\R$ under pointwise addition and scalar multiplication.
The Lie-bracket of two vector fields $f, g \in \gothX(\calM)$ is an anti-symmetric bracket $[f,g] \in \gothX(\calM)$ defined by $[f,g](x) = \Fr{z}{x}f(z) [g(x)] - \Fr{z}{x}g(z) [f(x)]$.
This bracket satisfies the Jacobi identity and $\gothX(\calM)$ is a Lie algebra \cite{1983_Warner}.

Let $\grpG$ be a finite-dimensional real Lie group; that is, a smooth manifold
endowed with a group multiplication and inverse operation which are smooth in the differential structure of the manifold.
For arbitrary $A, B \in \grpG$, the group multiplication is denoted by $AB$,
the group inverse by $A^{-1}$, and $\Id$ denotes the identity element of $\grpG$.
Define the left translation on the group by $L_A \colon \grpG \rightarrow \grpG$, $L_A B := AB$.
The right translation $R_A B := BA$ is analogous.
The differential of left (analog.~right) translation $\td L_A : \tT_B \grpG \to \tT_{AB} \grpG$ maps between tangent spaces.

For any vector $V \in \tT_{\Id} \grpG$ associate a vector field
$\td L_X V \in \gothX(\grpG)$ termed a \emph{left invariant} vector field.
The set of all such vector fields are denoted $\gothL(\grpG)$ and form a finite dimensional (due to the one-to-one correspondence with $\tT_\Id \grpG$) subspace $\gothL(\grpG) \subset \gothX(\grpG)$.
The left invariant vector fields $\gothL(\grpG)$ are closed under the Lie-bracket on $\gothX(\grpG)$ and form a finite-dimensional Lie-algebra.
Identifying each left-invariant vector field with an element of the identity tangent space induces a Lie-bracket on $\tT_\Id \grpG$ via
\begin{align}
[U,V] := \td L_{X^{-1}} [\td L_X U, \td L_X V]
\label{eq:Lie_bracket}
\end{align}
where the right hand side is the Lie-bracket of vector fields in $\gothX(\grpG)$ and the left hand side is an anti-symmetric operator on $\tT_\Id \grpG$ \cite{1983_Warner}.
In this paper, we will use the notation $\gothg$ to denote $\tT_\Id \grpG$ along with the left induced bracket \eqref{eq:Lie_bracket} and refer to this as the Lie-algebra of $\grpG$.

The set of right invariant vector fields $\{ \td R_X U \: |\; U \in \tT_\Id \grpG \}$ also forms a Lie sub-algebra $\gothR(\grpG) \subset \gothX(\calM)$.
The set of right and left invariant vector fields are not equal unless $\grpG$ is Abelian.
In this paper we distinguish strongly between $\gothL(\grpG)$, $\gothR(\grpG) \subset \gothX(\grpG)$ and $\gothg \equiv \tT_\Id \grpG$ along with the left induced bracket.

The family of smooth maps $I_X \colon \grpG \rightarrow \grpG$, for $X \in \grpG$,
\begin{align}
I_X(Z) := X Z X^{-1} = L_X R_{X^-1} Z, \quad  Z \in \grpG
  \label{eq:I_X}
\end{align}
are termed \emph{inner automorphisms} of $\grpG$.
The family of linear maps $\Ad_X \colon \gothg \rightarrow \gothg$, for $X \in \grpG$,
\begin{align}
\Ad_X (U) := \Fr{Z}{\Id} I_X(Z)[U],\quad U \in \gothg,
\label{eq:Ad_X}
\end{align}
are termed the \emph{Adjoint maps} (written with upper case `A') of $\gothg$.

A right group action $\phi$ of $\grpG$ on a smooth manifold $\calM$ is a smooth mapping
\begin{align*}
\phi & \colon \grpG \times \calM \rightarrow \calM,
\end{align*}
with $\phi(A,\phi(B,x)) = \phi(BA,x)$ and $\phi(I,x) = x$.
A left group action is analogous with $\phi(A,\phi(B,x)) = \phi(AB,x)$.
A group action induces families of smooth diffeomorphisms $\phi_A \colon \calM \rightarrow \calM$ for $A \in \grpG$ by $\phi_A(x) := \phi(A,x)$, and smooth (nonlinear) projections $\phi_x \colon \grpG \rightarrow \calM$ for $x \in \calM$ by $\phi_x(A) := \phi(A,x)$.
The group action $\phi$ is termed \emph{transitive} if $\phi_x$ is surjective and in this case the manifold $\calM$ is termed a \emph{homogeneous space} of $\grpG$ \cite{Boo86}.
The group action is termed \emph{effective} if the only element $A \in \grpG$ such $\phi(A,\xi) = \xi$ for all $\xi \in \calM$ is the identity $A = \Id$ \cite{Boo86}.
The group action is termed \emph{free} if for all $\xi \in \grpG$ the only element $A \in \grpG$ such $\phi(A,\xi) = \xi$ is the identity $A = \Id$.
For concatenation of group actions $\phi$ we write $\phi_X \phi_Z (v) = \phi_X(\phi_Z(\xi))$ to simplify notation.
For a group action $\phi \colon \grpG \times \calM \rightarrow \calM$, the stabilizer of an element $x \in \calM$ is given by
$\stab_\phi (x) = \{ A \in \grpG\;|\; \phi(A,x)=x\}$, and is a subgroup of $\grpG$.
Let $X \in \grpG$, then for a right action
$\stab_\phi (\phi_X(\mr{\xi})) = I_{X^{-1}} \stab_\phi (\mr{\xi})$.

\begin{proposition}\label{prop:homog_vel_transf}
Let $\calM$ be a homogeneous space with respect to a Lie group $\grpG$ and group action $\phi : \grpG\times \calM\rightarrow \calM$.
Then the following diagram commutes \\
\[
\xymatrix{
	\gothg \ar@{->}[r]^{\Ad_{X^{-1}}} \ar@{->}[d]_{\td \phi_{\mr{\xi}}} &
	\gothg \ar@{->}[d]^{\td \phi_{\phi_X (\mr{\xi})}} \\
	\tT_{\mr{\xi}} \calM \ar@{->}[r]^{\td \phi_{X}} &
	\tT_{\phi_X (\mr{\xi})} \calM
	}
\]
In particular, for $\mr{\xi} \in \calM$, $X \in \grpG$ and $U \in \gothg$ then
\begin{align}
\td \phi_X\td \phi_{\mr{\xi}} U & = \td \phi_{\phi_X(\mr{\xi})} \Ad_{X^{-1}} U \label{eq:homog_commutative}
\end{align}
\end{proposition}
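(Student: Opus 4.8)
The plan is to reduce the statement to two elementary identities between the orbit maps and their translates, both of which follow purely from the right-action axiom, and then differentiate. First I would unfold $\phi(A,\phi(B,x)) = \phi(BA,x)$ in two ways. Taking $x=\mr{\xi}$, $B=X$, $A=Z$ gives $\phi(Z,\phi_X(\mr{\xi})) = \phi(XZ,\mr{\xi})$, that is, the identity of smooth maps $\grpG\to\calM$
\[
\phi_{\phi_X(\mr{\xi})} = \phi_{\mr{\xi}}\circ L_X .
\]
Taking instead $x=\mr{\xi}$, $B=Z$, $A=X$ gives $\phi(X,\phi(Z,\mr{\xi})) = \phi(ZX,\mr{\xi})$, that is,
\[
\phi_X\circ\phi_{\mr{\xi}} = \phi_{\mr{\xi}}\circ R_X .
\]

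Next I would differentiate both identities at $\Id\in\grpG$ and restrict to $\gothg=\tT_\Id\grpG$, applying the chain rule; here $\td L_X$ and $\td R_X$ send $\gothg$ into $\tT_X\grpG$, and $\td\phi_{\mr{\xi}}$ then sends $\tT_X\grpG$ into $\tT_{\phi_X(\mr{\xi})}\calM$, so all the compositions are well typed. This yields, on $\gothg$,
\[
\td\phi_{\phi_X(\mr{\xi})} = \td\phi_{\mr{\xi}}\circ\td L_X, \qquad \td\phi_X\circ\td\phi_{\mr{\xi}} = \td\phi_{\mr{\xi}}\circ\td R_X .
\]
The remaining point is to rewrite $\td R_X$ on $\gothg$ through $\td L_X$ and the Adjoint. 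Since $L_{X^{-1}}\circ R_X = I_{X^{-1}}$ as maps $\grpG\to\grpG$ fixing $\Id$, the chain rule together with the definition \eqref{eq:Ad_X} gives $\td L_{X^{-1}}\,\td R_X U = \Ad_{X^{-1}}U$ for $U\in\gothg$; applying $\td L_X$ and using $\td L_X\,\td L_{X^{-1}} = \mathrm{id}$ on $\tT_X\grpG$ gives $\td R_X U = \td L_X\,\Ad_{X^{-1}}U$.

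Finally I would chain the three pieces together: for $U\in\gothg$,
\[
\td\phi_X\,\td\phi_{\mr{\xi}}\,U = \td\phi_{\mr{\xi}}\,\td R_X U = \td\phi_{\mr{\xi}}\,\td L_X\,\Ad_{X^{-1}}U = \td\phi_{\phi_X(\mr{\xi})}\,\Ad_{X^{-1}}U ,
\]
which is precisely \eqref{eq:homog_commutative}, and hence the commutativity of the diagram. I expect the only genuine difficulty to be the variance bookkeeping: because $\phi$ is a \emph{right} action, the natural ``translate the base point'' identity brings in $R_X$ rather than $L_X$, and it is exactly the combination $L_{X^{-1}}\circ R_X = I_{X^{-1}}$ needed to return to $\tT_\Id\grpG$ that is responsible for the appearance of $\Ad_{X^{-1}}$ (and not $\Ad_X$); some care is also needed to keep track of which tangent space each differential maps into.
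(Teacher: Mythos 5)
Your proof is correct and is essentially the paper's own argument: both rest on the right-action identities $\phi(X,\phi(Z,\mr{\xi}))=\phi(ZX,\mr{\xi})=\phi(I_{X^{-1}}(Z),\phi(X,\mr{\xi}))$ and differentiation at the identity, with your factorisation $\td R_X=\td L_X\,\Ad_{X^{-1}}$ playing exactly the role of the paper's insertion $ZX=X(X^{-1}ZX)$. The only difference is organisational — you state the identities at the level of maps and then apply the chain rule, where the paper carries out the same manipulation inside a single chain of Fr\'echet derivatives.
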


\begin{proof}
Direct computation yields
\begin{align*}
\td \phi_X \td \phi_{\mr{\xi}} U & = \td \phi_X  \Fr{Z}{\Id} \phi (Z,\mr{\xi}) [U] \\
& = \Fr{Z}{\Id} \phi(X, \phi (Z,\mr{\xi})) [U] \\
& = \Fr{Z}{\Id}  \phi (Z X,\mr{\xi})  [U] \\
& = \Fr{Z}{\Id}  \phi (X X^{-1} Z X ,\mr{\xi}) [U] \\
& = \Fr{Z}{\Id}  \phi(X^{-1} Z X, \phi (X  ,\mr{\xi})) [U] \\
& = \Fr{Z}{\Id}  \phi_{\phi_X  (\mr{\xi})}(X^{-1} Z X) [U] \\
& = \td \phi_{\phi_{X}(\mr{\xi})}  \Fr{Z}{\Id}  (X^{-1} Z X) [U] \\
& = \td \phi_{\phi_{X}(\mr{\xi})} \Ad_{X^{-1}} U
\end{align*}
which completes the proof.
\end{proof}

\section{Kinematic Systems}\label{sec:kinematic_systems}

In this section, we define kinematic systems and propose a framework in which to study them.
This section provides solutions to steps \ref{item:analysis_state} and \ref{item:analysis_kinematics} in the proposed observer design methodology \S\ref{sub:design_outline}.
Although this material is closely related to work on the modeling of mechanical systems \cite{marsden1999,bloch2003,bullo2004},
the focus on only the kinematics of the system and the importance of understanding the velocity measurements in formulating the observer problem, leads to new perspectives and warrants a careful development.

The goal of the section is to derive an input-state model $\dot{\xi} = f(\xi, v)$ for a general observer problem.
A key perspective is that the input $v(t)$ to such a model must itself be measured and modelling these measurements is a critical part of the observer design problem.
As a consequence, the structure of the system function $f$ cannot be assumed, but must be chosen to be compatible with the particular velocity measurements available.
The same `system' with different velocity measurements will yield a different input-state model.
This perspective is different from most treatments of the observer design problem where the starting position is an ordinary differential equation $\dot{\xi} = f(\xi, v)$ and the question of velocity measurement is not considered explicitly.
The advantage in the more abstract approach is that the role of the velocity measurement is made clear, an important point in \S\ref{sec:equivariant_representation} when invariant input extensions are discussed.
The reader who wishes to go directly to the material on symmetry where the model is assumed can skip straight to \S\ref{sec:equivariant_representation}.

\subsection{Kinematic Systems and Kinematic Models}\label{sub:kinematic_models}

Consider a system with state $\xi \in \calM$ evolving on a real $n$-dimensional manifold $\calM$.
We begin from an abstract point of view, defining a general kinematic behaviour on $\tT \calM$ as a behaviour that satisfies the fundamental kinematic constraint \eqref{eq:abstract_kin}.

\begin{definition}\label{def:kinematic_system}
A \emph{system} on a smooth manifold $\calM$ is a triple $([0,\infty), \tT \calM , \calB)$ where $\calB$ (the behaviour \cite{1998_Willems}) specifies a subset of all trajectories $\eta_\xi : [0,\infty) \to \tT\calM$.
A \emph{kinematic system} is one where the base point evolution $\xi(t) = \pi(\eta_\xi(t))$ is time differentiable and the behaviour satisfies the kinematic constraint
\begin{align}
\ddt \xi(t) = \eta_\xi (t). \label{eq:abstract_kin}
\end{align}
\end{definition}

The kinematic constraint \eqref{eq:abstract_kin} is the minimum constraint for a trajectory $\eta_\xi(t) \in \tT \calM$ to make physical sense; it encodes the property that $\eta_\xi$ is the velocity $\dot{\xi}(t)$ of $\xi(t)$ along trajectories of $\calB$.
Many systems, such as non-holonomic systems, have additional constraints that further restrict the set of trajectories in the behaviour, while the behaviour of systems such as first order kinematics of rigid-body motion are only constrained by \eqref{eq:abstract_kin}.
An observer that works for trajectories in a general behaviour also works for trajectories in a sub-behaviour and the theory developed will often exploit this property by embedding the behaviour of a system of interest into a larger behaviour that has nicer symmetry properties.

\begin{definition}\label{def:linear_system_funtion}
A \emph{linear system function} $f : V \rightarrow \gothX(\calM)$ is a linear homomorphism from a linear vector space $V$
\begin{align}
f & : V \to \gothX(\calM), \quad\quad  v \mapsto f_v.
\label{eq:f_v}
\end{align}

A linear system function is said to be a representation of a kinematic system $([0,\infty), \tT \calM , \calB)$ if there exists an extended system
$([0,\infty), \tT \calM  \times V , \calB \times \calB_V)$ for a product behaviour $\calB \times \calB_V$ such that
\[
\eta_\xi(t) = f_{v(t)}(\pi(\eta_\xi(t))) = f_{v(t)}(\xi(t)) \quad \text{ for all } \quad (\eta_\xi(t), v(t)) \in \calB \times \calB_V.
\]
\end{definition}

A linear system function representation of a kinematic behaviour $\calB$ allows trajectories of $\calB$ to be characterised as solutions of an input-state system
$\dot{\xi} = f_{v(t)}(\xi(t))$,
for input trajectories $v(t) \in  V$ drawn from the behaviour $\calB_V$.

The power of the behavioural perspective is made clear in considering a general affine control system
\begin{align}
\dot{\xi} = f_0(\xi) + \sum_{j=1}^m f_j(\xi) u_j
\label{eq:control_affine_system}
\end{align}
for general inputs $u(t)$.
The implicit understanding in writing this equation is that the inputs $u_j$ for $j = 1, \ldots, m$ are arbitrary and the behaviour $\calB$ is the set of all solutions $(\xi(t),\dot{\xi}(t))$ of \eqref{eq:control_affine_system} for all possible input functions $u(t) = (u_1(t),\ldots,u_m(t))$ and initial conditions.
To find a linear system function representation of these kinematics one can  augment the input space to include a drift input $u_0 \in \R$ and define the linear system function  $f : \R^{m+1} \to \gothX(\calM)$ by
\begin{align}
f_u (\xi) := \sum_{j=0}^m f_j(\xi) u_j.
\label{eq:LSF_affine_control_system}
\end{align}
It is direct to see that this is a linear homomorphism satisfying the first requirement of Def.~\ref{def:linear_system_funtion}.
The behaviour $\calB_V$ includes all the input trajectories $u(t)$ associated with solutions of \eqref{eq:control_affine_system} along with an additional signal $u_0(t) \equiv 1$ for all time and $f_u$ is a linear system function for $\calB$.

\begin{remark}
The construction \eqref{eq:LSF_affine_control_system} is particularly important in dealing with second order kinematic systems.  For example, consider second order kinematics with configuration state $(x,v) \in \tT \R^n$
\begin{align*}
\dot{x} & = v \\
\dot{v} & = a.
\end{align*}
and with $a \in \tT_\xi \R^n \triangleq \R^n$ the measured acceleration.
Such systems always have a drift term $(\dot{x},\dot{v}) = (v,0)$ that is non-zero even if the acceleration measurement is zero.
The kinematic model for such a system is
\begin{align*}
\ddt \begin{pmatrix}
  x \\ v
\end{pmatrix}
=
\begin{pmatrix}
  v & 0 \\ 0 & \Id_n
\end{pmatrix}
\begin{pmatrix}
u_0
 \\  u
\end{pmatrix}
\end{align*}
for general $(u_0,u) \in \R^{n+1}$.
Trajectories of the original system are obtained by choosing $(u_0, u) = (1, a)$.
This construction generalises in a straightforward manner to systems on the tangent bundle $\tT\calM$ of more general manifolds $\calM$.
\end{remark}

The linear system function $f : V \to \gothX(\calM)$ defines a linear subspace $\image f \subset \gothX(\calM)$ of the vector space of vector fields over $\calM$.
While $V$ itself is just a linear vector space, the subspace $\image f$ contains structure associated with its embedding as a subspace of $\gothX(\calM)$.
Since $f$ is a linear map then, at least assuming trivial kernel, one can identify the vector space $V$ with its image $V_f := \image f$.
There is a natural system function $f^\imath : V_f \to \gothX(\calM)$ given by the identity inclusion $f^\imath (f_v) := f_v$.
That is, if $f : V \to \gothX(\calM)$ is a linear system function representation of a behaviour $\calB$,  and $f$ has trivial kernel, then $f^\imath : V_f \to \gothX(\calM)$ is also a linear system function representation of the behaviour $\calB$.
The reverse implication is also true, for any subspace $V \in \gothX(\calM)$ then $f^\imath : V \to \gothX(\calM)$ is a linear system function representation of a behaviour defined by the solutions of $\dot{\xi} = f^\imath_{v(t)} (\xi)$ for any continuous choice of input $v(t) \in V$.
It follows that one can think of linear system function representations as subspaces $V \subset \gothX(\calM)$ along with the identity inclusion $f^\imath$.
It is important to note that this discussion and Def.~\ref{def:linear_system_funtion} does not require that the input vector space $V$ is finite dimensional.

\begin{definition}\label{def:velocity_output}
Let $\tT \calM$ be the tangent bundle of a state manifold $\calM$.
A \emph{velocity output} is a smooth map $g : \tT \calM \rightarrow \vecV$, into a finite-dimensional vector space $\vecV$, for which the map $g$ is linear on fibres of the bundle.  That is for any $\xi \in \calM$ and for all $k_1, k_2 \in \R$ and $\eta_\xi, \mu_\xi \in \tT_\xi \calM $ then
\[
g( k_1 \eta_\xi + k_2 \mu_\xi) = k_1  g(\eta_\xi)  + k_2 g(\mu_\xi).
\]
\end{definition}

Velocity measurements for physical systems are obtained from physical measurement devices whose own dynamics interact with the physical state of the system of interest, isolating and extracting part of the velocity state in a way that can then be measured and quantified as a real number.
For example, the angle of deflection of an oscillating mass in a MEMS rate gyroscope is linearly correlated to the angular velocity of the rigid-body to which it is attached, and can be measured as a voltage change within the device.
The doppler shift of the carrier frequency of GPS corresponds to the linear velocity of the receiver in the direction of the satellite, and can be measured by FFT analysis of the received signal, \textit{etc}.
In all cases each individual measurement is a scalar real number.
This model of multiple scalar measurements underlies all velocity sensor suites.

Real measurement devices have limited range, have non-linearities in the measurement process, as well as being susceptible to noise and bias.
These issues are critical to the real world performance of observers and should be considered at the appropriate point in the design process.
However, in the present development of a theoretical framework for equivariant observer design, we will consider noise free, or ideal, values of the physical velocity measured by the sensor device.
We use the term \emph{velocity output} to distinguish this ideal signal from a physical velocity measurement that would be corrupted by noise and sensor characteristics.

Physical velocities are linear quantities and can always be scaled and added.
Requiring linearity of the velocity output $g$ on the fibres of $\tT \calM$ enforces that the sensor does not destroy the natural linear structure of physical velocity.
Each scalar velocity output corresponds to velocity of the true system in one direction and can be positive or negative.
It follows that an idealised scalar velocity output maps into the whole of the real line $\R$.
Combining multiple outputs together naturally leads to a vector velocity output $g : \tT \calM \to \R^p =: \vecV$, where $p$ is the number of outputs and $\vecV$ is the notation that we introduce to denote the finite-dimensional input velocity vector space.

\begin{definition}\label{def:systemfunction_compatible}
Let $g : \tT\calM \to \vecV$ be a velocity output (Def.~\ref{def:velocity_output}).
A linear system function $f : \vecV \to \gothX(\calM)$ (Def.~\ref{def:linear_system_funtion}) is said to be \emph{compatible} with $g$ if
\begin{align}
\eta_\xi = f_{g(\eta_\xi)}(\xi) \label{eq:compatible_velocity_output}
\end{align}
for all $\xi\in\calM$ and $\eta_\xi \in \tT_\xi\calM$.
\end{definition}

Not every velocity output $g : \tT\calM \to \vecV$ admits a compatible linear system function $f : \vecV \to \gothX(\calM)$.
The following property provides a necessary and sufficient condition on a velocity output that for the existence of a compatible linear systems function $f$.

\begin{definition}\label{def:complete_output_g}
Let $g : \tT\calM \to \vecV$ be a velocity output (Def.~\ref{def:velocity_output}).
The output $g$ is termed \emph{complete} if, for all $\xi \in \calM$, the restricted map $g : \tT_\xi \calM \to \vecV$ is injective.
\end{definition}

Definition~\ref{def:complete_output_g} ensures that $g$ captures the full velocity information of the system at each point $\xi \in \calM$.

\begin{lemma}\label{lem:complete_output_g}
Let $g : \tT\calM \to \vecV$ be a velocity output (Def.~\ref{def:velocity_output}).
Then there exists a linear system function $f : \vecV \to \gothX(\calM)$ that is compatible with $g$ (Def.~\ref{def:systemfunction_compatible}) if and only if $g$ is complete (Def.~\ref{def:complete_output_g}).
\end{lemma}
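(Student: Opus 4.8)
The plan is to prove the two implications separately, noting that the heart of the matter is a pointwise linear-algebra statement at each $\xi \in \calM$ that must then be assembled into a globally smooth vector field. First, the easy direction: if a compatible linear system function $f : \vecV \to \gothX(\calM)$ exists, then \eqref{eq:compatible_velocity_output} reads $\eta_\xi = f_{g(\eta_\xi)}(\xi)$ for all $\eta_\xi \in \tT_\xi \calM$. If $g(\eta_\xi) = g(\mu_\xi)$ for $\eta_\xi, \mu_\xi \in \tT_\xi\calM$, then $\eta_\xi = f_{g(\eta_\xi)}(\xi) = f_{g(\mu_\xi)}(\xi) = \mu_\xi$, so the restriction $g : \tT_\xi\calM \to \vecV$ is injective; hence $g$ is complete. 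This uses only the definitions and needs no smoothness considerations.

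For the converse, suppose $g$ is complete. Fix $\xi \in \calM$. The restricted map $g_\xi := g|_{\tT_\xi\calM} : \tT_\xi\calM \to \vecV$ is linear (Def.~\ref{def:velocity_output}) and injective (completeness), so its image $W_\xi := \image g_\xi \subseteq \vecV$ is a linear subspace and $g_\xi : \tT_\xi\calM \to W_\xi$ is a linear isomorphism. The natural definition is: for $v \in \vecV$, set $f_v(\xi) := g_\xi^{-1}(P_\xi v)$ where $P_\xi : \vecV \to W_\xi$ is some linear projection onto $W_\xi$. Then for any $\eta_\xi \in \tT_\xi\calM$ we have $g(\eta_\xi) \in W_\xi$, so $P_\xi g(\eta_\xi) = g(\eta_\xi)$ and $f_{g(\eta_\xi)}(\xi) = g_\xi^{-1}(g(\eta_\xi)) = \eta_\xi$, which is exactly the compatibility condition \eqref{eq:compatible_velocity_output}. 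Linearity of $v \mapsto f_v(\xi)$ in $v$ is immediate since $g_\xi^{-1}$ and $P_\xi$ are linear, and this gives linearity of $f : \vecV \to \gothX(\calM)$ once we know $f_v$ is a well-defined smooth vector field.

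The main obstacle — the only real content beyond bookkeeping — is producing a choice of projections $P_\xi$ (equivalently, a left inverse of $g_\xi$) that varies \emph{smoothly} with $\xi$, so that $f_v : \calM \to \tT\calM$ is a genuine smooth vector field and $f$ maps into $\gothX(\calM)$. There are two ways to handle this. The clean global approach is to equip $\vecV \cong \R^p$ with the standard inner product and set $P_\xi$ to be the orthogonal projection onto $W_\xi$; then $f_v(\xi) = g_\xi^{\dagger} v$ where $g_\xi^\dagger = (g_\xi^\top g_\xi)^{-1} g_\xi^\top$ is the Moore–Penrose pseudo-inverse, which depends smoothly on $\xi$ because $g$ is smooth and $g_\xi^\top g_\xi$ is invertible by injectivity of $g_\xi$ (this should be checked in a local trivialisation of $\tT\calM$, where $g_\xi$ becomes a smoothly varying injective matrix and the matrix operations are manifestly smooth). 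Alternatively, one can argue locally: around each point, choose a local frame for $\tT\calM$, observe $g$ becomes an injective matrix-valued function, construct a smooth local left inverse, use it to define $f_v$ locally, and then note that compatibility \eqref{eq:compatible_velocity_output} forces $f_v(\xi) = g_\xi^{-1}(g(\eta_\xi))$ to agree on overlaps whenever the relevant value lies in $W_\xi$ — but since different local left inverses can disagree off $W_\xi$, the pseudo-inverse (or a partition-of-unity patching) is the safe route to a global object. I would present the pseudo-inverse construction, verify smoothness in a chart, and then check \eqref{eq:compatible_velocity_output} and linearity as above, which completes the proof.
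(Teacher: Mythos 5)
Your proof is correct and follows essentially the same route as the paper's: the paper's argument is the one-line observation that a linear map has a left inverse iff it is injective, applied pointwise, with smoothness of $f$ asserted to follow from smoothness of $g$. Your write-up simply fills in the details the paper omits, and in particular your Moore--Penrose pseudo-inverse construction is a clean way to justify the smoothness claim that the paper leaves implicit.
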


\begin{proof}
This follows from the fact that a linear map has a left inverse if and only if it is injective. Apply this fact point-wise for every $\xi\in\calM$. Smoothness of $f$ follows from smoothness of $g$.
\end{proof}

For a complete velocity output with compatible linear system function the abstract kinematics \eqref{eq:abstract_kin} of a kinematic system $\calB$ are fully characterised by the two constraints $\dot{\xi}(t) = f_{v(t)} (\xi)$, and $v(t) = g(\eta_\xi(t))$.
In particular, trajectories in $\calB$ are solutions of the ordinary differential equation
\begin{align}
\dot{\xi}(t) = f_{v(t)}(\xi(t)), \quad \xi(0) \in \calM  \label{eq:ODE_xi}
\end{align}
and $f$ is a linear system function representation (Def.~\ref{def:linear_system_funtion}) of $\calB$.

\begin{remark}
If the velocity distribution of the behaviour does not span the full tangent space $\tT_\xi \calM$ Definition \ref{def:complete_output_g} can be weakened to be injective on the distribution of accessible velocities of the system.
Examples of such systems are non-holonomic systems and second order kinematic systems.
In such cases Definition \ref{def:systemfunction_compatible} can be adapted to only require \eqref{eq:compatible_velocity_output} to hold for $\eta_\xi$ on the distribution of accessible velocities of the system and the linear system function is said to be \emph{compatible with $g$ on the behaviour $\calB$}.
A necessary and sufficient condition for the existence of such a linear systems function is that the velocity output is injective on the distribution of accessible velocities of the system and such an output is termed \emph{complete on the behaviour $\calB$}.
\end{remark}

Let $\ker f \subset \vecV$ denote the kernel of $f : \vecV \rightarrow \gothX(\calM)$,
\[
\ker f = \{ v \in \vecV \;|\;  f_v(\xi) = 0 \text{ for all } \xi \in \calM \}.
\]
The kernel of $f$ may be non-zero if there are linearly dependent velocity outputs on the vehicle, for example multiple inertial measurement units on a rigid body.
There are often very good reasons for building redundancy into velocity measurement systems, however, managing this redundancy must be handled outside the mathematical framework that we develop in this paper and we will factor out such dependency.
The function $\bar{f} : \vecV / \ker f \rightarrow \gothX (\calM)$ is well defined since $f_{v + k} = f_v + f_k = f_v$ for all $k \in \ker f$.
The quotient $\vecV / \ker f$ is itself a linear vector space and without loss of generality the analysis can be restricted to the case where $\ker f$ is trivial; that is, injective linear system functions.

\begin{lemma}\label{lem:injective_system_function}
Let $g : \tT\calM \to \vecV$ be a complete velocity output and let $f : \vecV \to \gothX(\calM)$ be a linear system function that is compatible with $g$. Then the velocity output $\bar{g} : \tT\calM \rightarrow \vecV / \ker f$, $\bar{g}(\eta):=g(\eta)+\ker f$ is complete and the injective linear system function $\bar{f} : \vecV / \ker f \rightarrow \gothX (\calM)$ is compatible with $\bar{g}$.
\end{lemma}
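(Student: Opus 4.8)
The plan is to verify the three things implicit in the statement: that $\bar g$ is a bona fide velocity output, that $\bar f$ is a well-defined injective linear system function, and that $\bar g$ is complete and compatible with $\bar f$. Two of these are essentially bookkeeping, and the only substantive point is the completeness of $\bar g$, which is where I expect the (mild) work to be.

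First I would record that $\bar g$ is a velocity output in the sense of Def.~\ref{def:velocity_output}: it is the composition of the smooth map $g$ with the canonical linear projection $\vecV \to \vecV/\ker f$, hence smooth, and since $g$ is linear on fibres and the projection is linear, $\bar g$ is linear on fibres of $\tT\calM$. Likewise $\bar f$ is well defined and linear exactly as in the paragraph preceding the lemma (because $f_{v+k} = f_v$ for $k \in \ker f$), and it is injective: $\bar f_{v + \ker f} = 0$ in $\gothX(\calM)$ forces $f_v = 0$, i.e.\ $v \in \ker f$, i.e.\ $v + \ker f$ is the zero coset. So $\bar f : \vecV/\ker f \to \gothX(\calM)$ is an injective linear system function.

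For completeness of $\bar g$, fix $\xi \in \calM$ and suppose $\eta_\xi, \mu_\xi \in \tT_\xi\calM$ satisfy $\bar g(\eta_\xi) = \bar g(\mu_\xi)$. Then $g(\eta_\xi) - g(\mu_\xi) \in \ker f$, and by linearity of $g$ on the fibre $\tT_\xi\calM$ this reads $g(\eta_\xi - \mu_\xi) \in \ker f$, so $f_{g(\eta_\xi - \mu_\xi)}$ is the zero vector field; in particular $f_{g(\eta_\xi - \mu_\xi)}(\xi) = 0$. On the other hand, compatibility of $f$ with $g$ (Def.~\ref{def:systemfunction_compatible}) together with linearity of $f$ and of $g$ on the fibre gives $f_{g(\eta_\xi - \mu_\xi)}(\xi) = f_{g(\eta_\xi)}(\xi) - f_{g(\mu_\xi)}(\xi) = \eta_\xi - \mu_\xi$. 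Hence $\eta_\xi = \mu_\xi$, so $\bar g$ restricted to each fibre is injective, i.e.\ $\bar g$ is complete.

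Finally, compatibility of $\bar f$ with $\bar g$ is immediate from the definitions: for any $\xi$ and $\eta_\xi \in \tT_\xi\calM$, $\bar f_{\bar g(\eta_\xi)}(\xi) = \bar f_{g(\eta_\xi) + \ker f}(\xi) = f_{g(\eta_\xi)}(\xi) = \eta_\xi$, using the definition of $\bar f$ and then compatibility of $f$ with $g$. The only place any care is needed is the fibrewise linearity bookkeeping in the completeness argument — the identity $g(\eta_\xi) - g(\mu_\xi) = g(\eta_\xi - \mu_\xi)$ is valid precisely because $\eta_\xi$ and $\mu_\xi$ lie in the same fibre — and everything else is a direct substitution; note that the hypothesis that $g$ is complete is in fact not separately needed, being automatic once a compatible $f$ exists by Lemma~\ref{lem:complete_output_g}.
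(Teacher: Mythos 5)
Your proof is correct and follows essentially the same route as the paper's: the completeness of $\bar g$ is obtained from $g(\eta_\xi)-g(\mu_\xi)\in\ker f$ together with compatibility and linearity, exactly as in the paper (you argue directly where the paper argues by contradiction, which is cosmetic), and the compatibility of $\bar f$ is the same one-line substitution. Your closing observation that the completeness hypothesis on $g$ is redundant given Lemma~\ref{lem:complete_output_g} is also correct.
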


\begin{proof}
Let $\xi\in\calM$ and let $\eta_\xi,\mu_\xi\in\tT_\xi\calM$ with $\eta_\xi\not=\mu_\xi$.
Assume, to arrive at a contradiction, that $\bar{g}(\eta_\xi)=\bar{g}(\mu_\xi)$.
Then $k:=g(\eta_\xi)-g(\mu_\xi)\in \ker f$ and $0=f_k(\xi)=f_{g(\eta_\xi)}(\xi)-f_{g(\mu_\xi)}(\xi)=\eta_\xi-\mu_\xi$ by \eqref{eq:compatible_velocity_output}, a contradiction. It follows that $\bar{g} : \tT_\xi\calM \to \vecV / \ker f$ is injective and $\bar{g}$ is a complete velocity output. Compatibility of $\bar f$ follows from
$\bar{f}_{\bar{g}(\eta_\xi)}(\xi)=\bar{f}_{g(\eta_\xi)+\ker f}(\xi)=f_{g(\eta_\xi)}(\xi)=\eta_\xi$.
\end{proof}

\begin{definition}\label{def:kinematic_model}
Assume that there is a kinematic system $([0,\infty), \tT\calM, \calB)$ (Def.~\ref{def:kinematic_system}) with smooth trajectories.
A \emph{kinematic model} $f: \vecV \to \gothX(\calM)$, is an injective linear system function $f$ (Def.~\ref{def:linear_system_funtion}) on a finite-dimensional vector space $\vecV$ that is compatible (Def.~\ref{def:systemfunction_compatible}) with a complete (Def.~\ref{def:complete_output_g}) velocity output $g : \tT \calM \to \vecV$ (Def.~\ref{def:velocity_output}).
\end{definition}

For the kinematic model $f : \vecV \to \gothX(\calM)$, the notation $v \mapsto f_v$ emphasises that the linear system function is thought of as a linear homomorphism, and maps into the vector space of smooth vector fields. However, when $f_v \in \gothX(\calM)$ is evaluated at a particular point $\xi \in \calM$ we will use the notation
\[
f : \calM \times \vecV \to \tT \calM, \quad\quad f(\xi,v) := f_v(\xi)
\]
interchangeably with the notation $f_v$ to emphasise the classical input-state systems structure of the representation.

By construction, a kinematic model representation is global.
It differs from classical geometric control systems formulations \cite{marsden1999,bloch2003,bullo2004} in that the bundle $\calM \times \vecV$ on which the input and state are defined is trivial by construction.
This property is tied to the particular nature of the velocity outputs and deserves further discussion.
The vector space $\vecV$ can be, and often is, of higher dimension than the manifold $\calM$ depending on the availability of different velocity sensor systems.
Using multiple velocity outputs is critical in overcoming structural limitations associated with velocity representation for systems on non-trivial state-space bundles.
For example, consider the velocity of a point on a sphere.
There is no non-degenerate smooth velocity map from $\tT \Sph^2 \to \R^2$
\cite{1912_Brouwer_MA,1979_Eisenberg_AMM} and this is a fundamental obstruction to deploying only two velocity sensors to measure the velocity of the point at every point on the sphere.
For example, consider using some sort of physical azimuth and elevation velocity output devices that measure the velocity of angles describing the position of the point on the sphere.
If only two such outputs are considered, there will always be singular points where one of the elevation or azimuth velocity outputs is in gymbal lock and measures zero for all instantaneous motions.
At such points the combination of the two velocity outputs drops rank and would fail to be complete (Def.~\ref{def:complete_output_g}).
This singularity can be overcome by using an additional elevation or azimuth velocity output taken with respect to a different axis.
That is, three velocity outputs are required to provide a complete velocity output for kinematics on a two-dimensional manifold $\Sph^2$.
An alternative sensor modality for this system is to mount a 3DOF strap-down gyro to the physical point on the sphere providing the full three-dimensional angular velocity associated with a physical frame moving that is rigidly attached to the point.
These three velocity outputs are also a complete velocity output for the point kinematics.
The two different sensor suites (three or more azimuth/elevation outputs versus the three angular velocities of a frame attached to the point) will yield different kinematic models and lead to different observer designs.

For the same behaviour $\calB$, different velocity outputs $g$ will call for different linear system functions $f$ and lead to different kinematic models.
The same behaviour $\calB$ with different sensors will lead to a different observer design problem, and ultimately to a different observer.

\begin{theorem}\label{th:LinearSystemFunctionRep_exists}
All kinematic systems (Def.~\ref{def:kinematic_system}) admit a kinematic model (Def.~\ref{def:kinematic_model}).
\end{theorem}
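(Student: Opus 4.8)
The plan is to reduce the statement to the construction of a single complete velocity output, after which Lemma~\ref{lem:complete_output_g} and Lemma~\ref{lem:injective_system_function} do the rest. The key observation is that the notions of \emph{compatible} and \emph{injective} linear system function do not reference the behaviour $\calB$ at all: once we have a complete velocity output $g : \tT\calM \to \vecV$ into a finite-dimensional vector space $\vecV$, Lemma~\ref{lem:complete_output_g} furnishes a compatible linear system function $f : \vecV \to \gothX(\calM)$, and Lemma~\ref{lem:injective_system_function} lets us pass to $(\bar g,\bar f)$ on $\vecV/\ker f$ to obtain an \emph{injective} one, which is by definition a kinematic model. As noted in the discussion preceding Definition~\ref{def:kinematic_model}, such an $f$ (with complete $g$) is automatically a linear system function representation of $\calB$ with the input signals $v(t) = g(\eta_\xi(t))$; and since the construction ignores $\calB$, the same $f$ serves every kinematic system on $\calM$.

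So everything reduces to exhibiting a complete velocity output. First I would invoke the Whitney embedding theorem to get a smooth embedding $\iota : \calM \hookrightarrow \R^N$; in fact a smooth immersion already suffices for what follows. Its differential is a smooth bundle map $\td\iota : \tT\calM \to \tT\R^N$, and under the canonical identification $\tT\R^N \cong \R^N \times \R^N$, composing with the projection onto the second factor yields a smooth map
\[
g : \tT\calM \to \R^N, \qquad g(\eta_\xi) := \mathrm{pr}_2\bigl(\td\iota\,\eta_\xi\bigr).
\]
For each $\xi \in \calM$ the restriction $\td\iota|_{\tT_\xi\calM} : \tT_\xi\calM \to \tT_{\iota(\xi)}\R^N \cong \R^N$ is linear, so $g$ is linear on the fibres of $\tT\calM$ and hence a velocity output in the sense of Definition~\ref{def:velocity_output} with $\vecV = \R^N$; and since $\iota$ is an immersion, $\td\iota|_{\tT_\xi\calM}$ is injective, so $g$ restricted to each fibre $\tT_\xi\calM$ is injective, i.e.\ $g$ is complete in the sense of Definition~\ref{def:complete_output_g}.

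With this $g$ the argument closes as above: Lemma~\ref{lem:complete_output_g} supplies a compatible $f : \R^N \to \gothX(\calM)$, and if $\ker f \neq \{0\}$ then Lemma~\ref{lem:injective_system_function} produces an injective compatible linear system function on the quotient $\R^N/\ker f$; in either case we have a kinematic model, and it applies to an arbitrary kinematic system on $\calM$. I do not expect a serious obstacle: the only non-formal ingredient is the existence of a fibrewise-linear, fibrewise-injective smooth map from $\tT\calM$ into a trivial finite-rank bundle, which is exactly what differentiating a Whitney embedding gives (a partition-of-unity argument assembling local trivialisations of $\tT\calM$ over a locally finite atlas is an alternative, more hands-on route that avoids quoting Whitney). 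The one point I would state carefully is precisely that, because the construction does not see $\calB$, the word ``all'' in the theorem is covered uniformly.
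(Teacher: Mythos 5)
Your proposal is correct and takes essentially the same route as the paper: the paper's proof also invokes the Whitney embedding theorem, defines $g(\eta_\xi)$ as the pushforward of $\eta_\xi$ under the embedding (written there via a representative curve rather than $\td\iota$), notes completeness, and then closes with Lemma~\ref{lem:complete_output_g} followed by Lemma~\ref{lem:injective_system_function}. Your additional observations --- that an immersion already suffices, and that the construction is independent of $\calB$ so the statement holds uniformly for all kinematic systems on $\calM$ --- are correct refinements of the same argument rather than a different approach.
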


\begin{proof}
For any smooth manifold $\calM$ of dimension $n$ then the Whitney embedding theorem \cite{1993_Adachi} states that there is an embedding map $\imath : \calM \hookrightarrow \R^{n_\star}$ with $n_\star \geq 2n$.
For any element $\eta_\xi \in \tT_\xi \calM$ there exists a curve $\xi(t)$ in $\calM$ with $\dot{\xi}(0) = \eta_\xi$.
Define $g : \tT\calM \to \R^{n_\star}$
\[
g(\eta_\xi) = \left. \ddt \right|_{0}\imath \circ \xi (t) \in \R^{n_\star}.
\]
It is straightforward to see that this defines a complete velocity output into a finite dimensional vector space $\R^{n_\star}$.
By Lemma \ref{lem:complete_output_g} there exists a linear system function
$f : \R^{n_\star} \to \gothX(\calM)$ that is compatible with $g$. The result now follows from Lemma \ref{lem:injective_system_function}.
\end{proof}

Clearly this result is of theoretical interest only, since construction of embedding maps is not a practical manner in which to generate velocity outputs.
However, Theorem \ref{th:LinearSystemFunctionRep_exists} serves to emphasise that the inherent reformulation of the input-state structure of the system onto the trivial bundle $\calM \times \vecV$ with the linear system function structure is not a restriction on the class of systems that can be modelled.

In conclusion, we distinguish strongly between a \emph{kinematic system} consisting of a behaviour $\calB$ satisfying \eqref{eq:abstract_kin} on $\tT\calM$, a general \emph{linear system function representation} $f : V \to \gothX(\calM)$ of the kinematic system, and a \emph{kinematic model} that is an injective linear system function $f : \vecV \to \gothX(\calM)$ that is compatible with a complete velocity output $g : \tT \calM \to \vecV$ on a finite-dimensional vector space $\vecV$ for which the trajectories satisfy the ODE system model \eqref{eq:ODE_xi}.

\section{Equivariant Systems}\label{sec:equivariant_representation}

This section considers the question of symmetry and develops the underlying theory of equivariant systems.
We assume that a kinematic model for the system (Def.~\ref{def:kinematic_model}) is available for which the state-space $\calM$ is a homogeneous space.
The associated linear system function is a representation for the behaviour and is the primary structure that is used in the development of the equivariant systems theory.
The section provides solutions to steps \ref{item:analysis_symmetry}, \ref{item:analysis_input_sym} and \ref{item:analysis_lift} for the proposed observer design methodology \S\ref{sub:design_outline}.

\subsection{Symmetry}\label{sub:symmetry}

\begin{definition}\label{def:symmetry_group}
Let $\grpG$ be a Lie group and $\phi : \grpG \times \calM \to \calM$ be a transitive smooth action on a smooth manifold $\calM$.
A linear system function (Def.~\ref{def:linear_system_funtion}) on $\calM$ is said to have \emph{homogeneous state}.
\end{definition}

A linear system function representation of a kinematic system with homogeneous state imposes no additional constraints on the system kinematics other than that the state space $\calM$ is a homogeneous space.
In particular, we make no assumption that the linear system function $f : V \to \gothX(\calM)$ has any equivariance or symmetry properties to begin with, allowing us to consider a very wide range of examples.

A choice as to the handedness, left or right, of the group action $\phi$ must be made during the formulation of the observer design problem.
In practice, there are accepted models for robotic systems in the literature that go along with commonly accepted understanding for the meaning of certain state variables.
Associated with these models, there are natural choices of handedness in order that the various system states carry the standard physical interpretation.
Right-handed invariance is the most natural representation for physical systems with body-fixed sensor systems.
This includes the majority of robotic applications of interest where  vehicles have onboard sensor systems.
This includes attitude estimation, pose estimation, velocity-aided-attitude, SLAM, VO, VIO.
Left-handed invariance is more natural for a situation where a ground based system estimates state of an observed vehicle moving through its sensor field.
Since the authors come from a robotics perspective, where the majority of applications involve embedded sensor systems mounted on moving robots, we choose to use the right-handed invariance to develop the theory.

\begin{remark}
If desired, the choice of handedness can be reversed by re-defining the group multiplication.
More precisely, for a group $\grpG$ the operation $\odot \colon \bar{\grpG}\times\bar{\grpG}\rightarrow\bar{\grpG}$, $A\odot B:=BA$ turns a copy $\bar{\grpG}$ of the \emph{set} $\grpG$ into a group that is isomorphic to the \emph{group} $\grpG$ via $A\mapsto A^{-1}$.
This simple transformation of the group reverses the handedness of the induced action $\ob{\phi} :  \bar{\grpG} \times \calM \to \calM$, $\ob{\phi}(A,\xi) := \phi(A,\xi)$.
All results that follow have direct analogues in the opposite handedness with suitable development.
\end{remark}

\begin{assumption}\label{ass:right_handed}
The group action $\phi$ in Definition \ref{def:symmetry_group} is right-handed.
\end{assumption}

A group action $\phi$ of $\grpG$ on $\calM$ induces a group action of $\grpG$ on $\gothX(\calM)$ that we will term the \emph{induced action on vector fields}.
The underlying construction for the pushforward of a vector field associated to a diffeomorphism on a manifold is known \cite{2003_Lee}, however, the generalisation of this to a Lie group action, although direct, does not appear to be well documented in the literature.

\begin{lemma}\label{lem:d_star_phi}
\textbf{[Induced action on vector fields]}
Let $\phi : \grpG \times \calM \rightarrow \calM$ be a transitive (right) group action on $\calM$.
Define a smooth map $\td_\star \phi  : \grpG \times \gothX(\calM) \rightarrow \gothX(\calM)$, by
\begin{align}
\td_\star \phi (Z, f) := \td \phi_Z f \circ \phi_{Z^{-1}}.
\label{eq:overline_dphi}
\end{align}
Then $\td_\star \phi$ is a Lie-algebra group action on $\gothX(\calM)$.
That is, $\td_\star \phi$ is a group action for which each mapping
$\td_\star \phi_Z : \gothX(\calM) \rightarrow \gothX(\calM)$ is a Lie-algebra homomorphism.
\end{lemma}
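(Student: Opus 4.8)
The plan is to verify the three defining properties in turn: (i) that $\td_\star\phi$ is well-defined, i.e. $\td_\star\phi(Z,f)$ really is a smooth vector field on $\calM$; (ii) that it is a group action; and (iii) that each $\td_\star\phi_Z$ is a Lie-algebra homomorphism of $\gothX(\calM)$. For (i), fix $Z\in\grpG$ and $f\in\gothX(\calM)$. Since $\phi_Z:\calM\to\calM$ is a diffeomorphism (its inverse is $\phi_{Z^{-1}}$, using the right-action identity $\phi(Z^{-1},\phi(Z,\xi))=\phi(ZZ^{-1},\xi)=\xi$), the assignment $\xi\mapsto \td\phi_Z\, f(\phi_{Z^{-1}}(\xi))$ sends $\xi$ to a tangent vector based at $\phi_Z(\phi_{Z^{-1}}(\xi))=\xi$, so it lies in $\tT_\xi\calM$; smoothness is immediate from smoothness of $\phi$ (hence of $\td\phi$) and of the inverse map. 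This is exactly the pushforward $(\phi_Z)_*f$ of $f$ along the diffeomorphism $\phi_Z$, so I would simply invoke the standard fact \cite{2003_Lee} that pushforward along a diffeomorphism is a Lie-algebra automorphism of the vector fields, which immediately disposes of the homomorphism part of (iii) as well — the only genuinely new content is the group-action bookkeeping in (ii), which is where the right-handed convention matters.

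For (ii), the identity axiom $\td_\star\phi_{\Id}f = \td\phi_{\Id}\,f\circ\phi_{\Id} = f$ is trivial since $\phi_{\Id}=\id_\calM$ and $\td\phi_{\Id}=\id_{\tT\calM}$. For the composition axiom I need to check that $\td_\star\phi_A\,\td_\star\phi_B = \td_\star\phi_{BA}$, matching the right-action convention $\phi(A,\phi(B,\xi))=\phi(BA,\xi)$, i.e. $\phi_A\circ\phi_B = \phi_{BA}$. Compute directly:
\[
\td_\star\phi_A\big(\td_\star\phi_B f\big) = \td\phi_A\,\big(\td\phi_B\, f\circ\phi_{B^{-1}}\big)\circ\phi_{A^{-1}}
= \td\phi_A\,\td\phi_B\, f\circ\phi_{B^{-1}}\circ\phi_{A^{-1}}.
\]
By the chain rule $\td\phi_A\,\td\phi_B = \td(\phi_A\circ\phi_B) = \td\phi_{BA}$, and on base points $\phi_{B^{-1}}\circ\phi_{A^{-1}} = \phi_{A^{-1}B^{-1}} = \phi_{(BA)^{-1}}$, so the right-hand side equals $\td\phi_{BA}\, f\circ\phi_{(BA)^{-1}} = \td_\star\phi_{BA}f$, as required. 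Smoothness of $\td_\star\phi$ as a map $\grpG\times\gothX(\calM)\to\gothX(\calM)$ follows from smoothness of $\phi$ and its derivative (interpreting smoothness in $\gothX(\calM)$ pointwise / in the appropriate sense for the infinite-dimensional target, as is standard for such induced actions).

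For (iii), I would argue that $\td_\star\phi_Z$ respects the Lie bracket. The clean route is to recognize $\td_\star\phi_Z f = (\phi_Z)_* f$ and cite naturality of the Lie bracket under pushforward by a diffeomorphism: $(\phi_Z)_*[f,g] = [(\phi_Z)_*f,(\phi_Z)_*g]$ \cite{2003_Lee}. If a self-contained check is preferred, one expands the bracket $[f,g](x) = \Fr{z}{x}f(z)[g(x)] - \Fr{z}{x}g(z)[f(x)]$ and uses the change of variables $z = \phi_{Z^{-1}}(w)$ together with the chain rule; this is the routine computation underlying the cited naturality statement, and I would not grind through it. Linearity of $\td_\star\phi_Z$ over $\R$ is clear since $\td\phi_Z$ is linear fibrewise and precomposition with $\phi_{Z^{-1}}$ is linear on $\gothX(\calM)$.

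The main obstacle is purely notational/conventional rather than mathematical: getting the handedness right so that $\phi_A\circ\phi_B = \phi_{BA}$ feeds correctly through the chain rule and gives $\td_\star\phi_{BA}$ (and not $\td_\star\phi_{AB}$) — the inverse in the definition \eqref{eq:overline_dphi} is precisely what makes this work out, and it is worth displaying the composition computation above in full in the paper. Everything else reduces to invoking standard properties of pushforward along diffeomorphisms.
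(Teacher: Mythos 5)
Your proposal is correct and follows essentially the same route as the paper's proof: an explicit verification of the identity and composition axioms under the right-action convention (yielding $\td_\star\phi_A\,\td_\star\phi_B=\td_\star\phi_{BA}$, exactly as in the paper's computation with $X,Y$), together with an appeal to naturality of the Lie bracket under pushforward by a diffeomorphism for the homomorphism property. The only difference is that you spell out the well-definedness (base-point) check that the paper dismisses as straightforward; no gaps.
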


\begin{proof}
It is straightforward to verify that $\td_\star \phi$, a parameterized family of pushforwards of vector fields for each group element, is well defined.
Note that
\[
\td_\star \phi (\Id,f) =
\td \phi_{\Id} f \circ \phi_{\Id^{-1}} = f.
\]
For $X, Y \in \grpG$ compute
\begin{align*}
\td_\star \phi (Y, \td_\star \phi (X, \mu)))
& =
\td \phi_Y \left(\td \phi_X \mu \circ \phi_{X^{-1}}\right) \circ \phi_{Y^{-1}}
 =
\td \phi_{XY} \mu \circ \phi_{(XY)^{-1}}
 =
\td_\star \phi (XY,\mu).
\end{align*}
The Lie-algebra homomorphism property follows from the fact that vector field Lie-brackets are equivariant under transformation by smooth tangent maps.
\end{proof}

\subsection{Equivariant Kinematics and the Extended Input Space}\label{sub:extended_kinematics}

In this section, we show that for a linear system function $f : V \to \gothX(\calM)$ with homogeneous state, then there is a natural extension of the input space $\calV \supseteq V$ and an extended linear system function $\calf : \calV \to \gothX(\calM)$ that is equivariant under the group action $\phi$.
We begin by defining an equivariant linear system function representation on a general input vector space $V$ and then go on to develop the input extension.

\begin{definition}\label{def:equivariant_model}
Consider a linear system function $f : V \to \gothX(\calM)$ (Def.~\ref{def:linear_system_funtion}) with homogeneous state (Def.~\ref{def:symmetry_group}).
This function is termed \emph{equivariant} if there exists a vector space group action
\begin{align}
\psi & : \grpG \times V \rightarrow V \label{eq:psi}
\end{align}
such that for all $X \in \grpG$, $\xi \in \calM$, $v \in V$ then
\begin{align}
\td \phi_X f(\xi, v) = f(\phi_X(\xi),\psi_X(v) ).
\label{eq:psi_compatible}
\end{align}
\end{definition}

Consider a linear system function $f:V \to \gothX(\grpG)$ with homogeneous state and assume that there is a vector space group action $\psi : \grpG \times V \to V$ that is well defined on the input space $V$ for which the resulting system is equivariant (Def.~\ref{def:equivariant_model}).
Then
\begin{align}
f_{\psi_A(v)} (\xi)
& =  f_{\psi_A(v)} (\phi_A \phi_{A^{-1}}(\xi))\notag \\
& = \td \phi_A f_v (\phi_{A^{-1}}(\xi)) \label{eq:Fpsi:2} \\
& = \td_\star \phi _A f_v (\xi) \label{eq:Fpsi:3}
\end{align}
where \eqref{eq:Fpsi:2} follows from \eqref{eq:psi_compatible} and \eqref{eq:Fpsi:3} follows from \eqref{eq:overline_dphi}.
In particular, the input group action $\psi$ is uniquely determined by the induced action $\td_\star \phi$ (Lemma~\ref{lem:d_star_phi}).
For $A \in \grpG$ then $\td_\star \phi_A $ is always well defined as a map $\gothX(\calM) \to \gothX(\calM)$.
Thus, \eqref{eq:Fpsi:3} can be viewed as stating that $\td_\star \phi $ maps elements of $\image f \subset \gothX(\calM)$ into $\image f$, in particular $f_v \mapsto f_{\psi_A(v)}$, that is the induced action $\td_\star \phi  : \grpG \times \image f \to \image f$ is closed on $\image f \subset \gothX(\calM)$.

Looking at the above discussion from the opposite perspective, then given a linear system function $f : V \to \gothX(\calM)$ on a general input vector space $V$ one can extend the input space of $f$ to include the closure of $\image f$ under action by $\td_\star \phi $.

\begin{definition}\label{def:input_ext}\textbf{[Equivariant Input Extension]}
Let $f : V \to \gothX(\calM)$ be a linear system function with homogeneous state.
Define
\begin{align}
\calV = \spn \{ \td_\star \phi _A f_v \;|\; v \in V, A \in \grpG \} \subset \gothX(\calM) \label{eq:input_extension}
\end{align}
to be the smallest subspace of $\gothX(\calM)$ generated by the image of $\td_\star \phi $ applied to $\image f$.
\end{definition}

Note that any element in $\calV$ can be written
\begin{align*}
\calf_u =   \sum_{i = 1}^K \td_\star \phi _{A_i} f_{v_i}
\end{align*}
for $K \in \N$ and some collection $\{A_i\} \in \grpG$ and $\{v_i\} \in V$.
Although each element $f_u \in \calV$ can be seen directly as an element of $\gothX(\calM)$ we will continue to index elements using lower case letters $u, v, w \in \calV$ via the correspondence $u \leftrightarrow \calf_u$.
This notation emphasises the vector space nature of $\calV$ and the role of its elements as inputs.
If $v \in V$ then $\calf_v = f_v$ corresponds to an element in the image of the original linear system function by construction.
With this notation, the vector space $\calV$ can be viewed as an input space for an extended system by inclusion
\begin{align}
\calf : \calV \to \gothX(\calM). \quad\quad u \mapsto \calf_u  :=  \sum_{i = 1}^K \td_\star \phi _{A_i} f_{v_i}
\label{eq:calf}
\end{align}
for a suitable collection of $\{v_i\} \in V$ and $\{A_i\} \in \grpG$.
The resulting function is clearly linear in the input $u$ and satisfies the criteria to be a linear system function.

\begin{theorem}\label{th:equivariant_extension}
Let $f : V \to \gothX(\grpG)$ be a linear system function with homogeneous state and let $\calV$ be its equivariant input extension (Def.~\ref{def:input_ext}).
The linear system function $\calf : \calV \rightarrow \gothX(\calM)$ generated by the correspondence \eqref{eq:calf} is equivariant with group action $\psi : \grpG \times \calV \to \calV$ defined by
\begin{align}
\calf_{\psi_A(w)} := \td_\star \phi _A \calf_w
\label{eq:equivariant_psi}
\end{align}
for all $A\in\grpG$ and $\calf_w\in\calV$.
\hfill$\square$
\end{theorem}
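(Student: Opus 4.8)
The plan is to verify the two defining properties of an equivariant linear system function (Def.~\ref{def:equivariant_model}) for $\calf$ with the candidate action $\psi$ given by \eqref{eq:equivariant_psi}: first that $\psi$ is a well-defined vector space group action on $\calV$, and second that the compatibility relation \eqref{eq:psi_compatible} holds, i.e. $\td\phi_A\,\calf(\xi,w) = \calf(\phi_A(\xi),\psi_A(w))$ for all $A\in\grpG$, $\xi\in\calM$, $w\in\calV$.

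First I would address well-definedness. The map $\td_\star\phi_A : \gothX(\calM)\to\gothX(\calM)$ is a linear isomorphism by Lemma~\ref{lem:d_star_phi}, so it certainly sends the subspace $\calV$ linearly into $\gothX(\calM)$; the key point is that it maps $\calV$ \emph{into itself}. This is immediate from the spanning description \eqref{eq:input_extension}: a generator $\td_\star\phi_B f_v$ of $\calV$ is sent by $\td_\star\phi_A$ to $\td_\star\phi_A\td_\star\phi_B f_v = \td_\star\phi_{BA} f_v$ using the group-action property of $\td_\star\phi$ established in Lemma~\ref{lem:d_star_phi}, which is again a generator of $\calV$. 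Hence $\td_\star\phi_A(\calV)\subseteq\calV$, and since $\td_\star\phi_{A^{-1}}$ is its inverse we actually get $\td_\star\phi_A(\calV)=\calV$. Then $\psi_A$, defined on $\calV$ by transporting $\td_\star\phi_A$ through the correspondence $w\leftrightarrow\calf_w$, is a well-defined linear automorphism of $\calV$; because $\calV$ consists of honest vector fields and the correspondence is just the identity inclusion, no ambiguity in the representation of $w$ as $\sum_i\td_\star\phi_{A_i}f_{v_i}$ can cause trouble. The group-action identities $\psi_\Id = \id_\calV$ and $\psi_A\psi_B = \psi_{BA}$ (right action, matching Assumption~\ref{ass:right_handed}) are then inherited directly from the corresponding identities for $\td_\star\phi$ proved in Lemma~\ref{lem:d_star_phi}. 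Smoothness of $\psi$ in its arguments likewise follows from smoothness of $\td_\star\phi$.

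Next I would verify \eqref{eq:psi_compatible}. It suffices to check it when evaluated at an arbitrary $\xi\in\calM$: for $w\in\calV$ and $A\in\grpG$,
\begin{align*}
\calf_{\psi_A(w)}(\xi) = \bigl(\td_\star\phi_A\,\calf_w\bigr)(\xi) = \td\phi_A\,\calf_w\bigl(\phi_{A^{-1}}(\xi)\bigr) = \td\phi_A\,\calf_w\bigl(\phi_{A^{-1}}(\xi)\bigr),
\end{align*}
using \eqref{eq:equivariant_psi} and the definition \eqref{eq:overline_dphi} of $\td_\star\phi$. Writing $\zeta := \phi_{A^{-1}}(\xi)$ so that $\phi_A(\zeta)=\xi$, this reads $\calf_{\psi_A(w)}(\phi_A(\zeta)) = \td\phi_A\,\calf_w(\zeta)$, which is exactly \eqref{eq:psi_compatible} with base point $\zeta$; since $\zeta$ ranges over all of $\calM$ as $\xi$ does (by transitivity, or simply bijectivity of $\phi_{A^{-1}}$), the relation holds at every point. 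Thus $\calf$ together with $\psi$ satisfies Definition~\ref{def:equivariant_model} and is equivariant.

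I do not expect any serious obstacle here; the content was effectively front-loaded into Lemma~\ref{lem:d_star_phi} and the construction in Definition~\ref{def:input_ext}. The only point requiring a little care is the well-definedness argument of the first paragraph — confirming that $\psi_A$ does not depend on the chosen representation of an element of $\calV$ as a finite sum of pushed-forward generators — but this is automatic once one recalls that elements of $\calV$ are literally vector fields and $\psi_A$ is just the restriction of the linear map $\td_\star\phi_A$ to the invariant subspace $\calV$. Everything else is a direct unwinding of \eqref{eq:overline_dphi}, \eqref{eq:input_extension} and the group-action law from Lemma~\ref{lem:d_star_phi}.
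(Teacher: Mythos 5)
Your proposal is correct and follows essentially the same route as the paper's proof: closure of $\td_\star\phi$ on $\calV$ via the generator computation $\td_\star\phi_A\td_\star\phi_B f_v = \td_\star\phi_{BA}f_v$, inheritance of the group-action and linearity properties of $\psi$ from $\td_\star\phi$, and a pointwise unwinding of \eqref{eq:overline_dphi} to obtain \eqref{eq:psi_compatible}. Your explicit remark on well-definedness (that $\psi_A$ is just the restriction of $\td_\star\phi_A$ to the invariant subspace of honest vector fields, so no representation ambiguity arises) is a point the paper leaves implicit, but the substance is identical.
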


\begin{proof}
For any $\calf_w = \sum \td_\star \phi _{A_i} f_{v_i} \in \calV$ then
\begin{align*}
\td_\star \phi _B \calf_w & =\td_\star \phi _B \sum \td_\star \phi _{A_i} f_{v_i}
 = \sum \td_\star \phi _{A_i B} f_{v_i} \in \calV.
\end{align*}
That is, $\td_\star \phi $ is closed on $\calV$.
Since $\td_\star \phi $ is a group action then $\td_\star \phi_I=\id$ and $\td_\star \phi _A \td_\star \phi _B = \td_\star \phi _{BA}$, and hence $\psi_I=\id$ and $\psi_A\psi_B=\psi_{BA}$ for all $A,B\in\grpG$. Hence $\psi$ is a group action. Linearity of $\psi_A$ for each $A\in\grpG$ follows from linearity of $\td_\star \phi _A$, that is $\psi$ is a vector space group action.
Recalling \eqref{eq:equivariant_psi} then
\[
\td \phi_A \calf_w (X) = \calf_{\psi_A(w)}(\phi_A (X))= \calf( \phi_A (X), {\psi_A(w)})
\]
for $w \equiv \calf_w \in \calV$.
This proves equivariance.
\end{proof}

\subsection{System lifts on the Symmetry Group}\label{sub:system_lift}

\begin{definition}\label{lem:system_lift}
Consider a linear system function $f : V \to \gothX(\calM)$ with homogeneous state (Def.~\ref{def:symmetry_group}).
A smooth map $\Lambda : \calM \times V \rightarrow \gothg$, linear in the $V$ variable, such that
\begin{align}
\td \phi_{\xi} \Lambda(\xi,v) = f(\xi,v) \label{eq:project_Lambda}
\end{align}
for all $\xi\in\calM$ and $v\in V$ is termed a \emph{lift}.
\end{definition}

Since $\calM$ is a homogeneous space the group action $\phi$ is transitive and the partial map $\phi_\xi$ is surjective for any $\xi \in \calM$.
Therefore, $\tD \phi_\xi(\id)$ is full rank and varies smoothly and there must exist at least one smooth right-inverse $\tD \phi_\xi(\id)^\dagger$.
For such a right inverse, define
$\Lambda(\xi, v) := \tD \phi_\xi(\id)^\dagger \cdot f(\xi, v)$.
It is straightforward to verify that this $\Lambda$ satisfies the requirements of Definition \ref{lem:system_lift} and consequently that a lift exists for any linear system function with homogeneous state.
If the action is free then $\tD \phi_\xi(\id)$ is invertible and the lift is unique.

A lift $\Lambda : \calM \times V \to \gothg$ provides the necessary structure to construct a \emph{lifted system} on the symmetry group.
In order to make this construction, it is necessary to first pick a reference point $\mr{\xi} \in \calM $ that we term the \emph{origin point}.
The role of this point is to provide an origin for a global coordinate parametrization $\phi_{\mr{\xi}} : \grpG \rightarrow \calM$ of the state-space $\calM$ by the symmetry group $\grpG$.
For a lift function  $\Lambda (\xi,v)$ (Lemma \ref{lem:system_lift}) then the \emph{lifted system} is given by
\begin{align}
\dot{X} = \td L_X \Lambda(\phi_{\mr{\xi}}(X), v), \quad X(0) \in \grpG \label{eq:lifted_system}
\end{align}
Solutions to the lifted system evolve on the Lie-group $\grpG$ and project down onto trajectories of the kinematic system \eqref{eq:ODE_xi} via the map $\phi_{\mr{\xi}}$.

\begin{lemma}\label{lem:project_system}
Let $f : V \rightarrow \gothX(\calM)$ be a linear system function representation (Def.~\ref{def:linear_system_funtion}) of a kinematic system $([0,\infty),T \calM, \calB)$ (Def.~\ref{def:kinematic_system}) with  extended behaviour  $\calB \times \calB_V$.
Consider an input signal $v(t) \in V$ for $t \geq 0$ drawn from the behaviour $\calB_V$.
Let $X(t)$ denote the solution to the lifted system \eqref{eq:lifted_system} for input $v(t)$.
If the initial condition $X(0) \in \grpG$ satisfies
\[
\phi_{\mr{\xi}}(X(0)) = \xi(0)
\]
then
\[
\phi_{\mr{\xi}}(X(t)) = \xi(t) \textrm{ for all }t \geq 0,
\]
where $\xi(t)$ is the solution of \eqref{eq:ODE_xi}; that is, $(\xi(t),\dot{\xi}(t)) \in \tT\calM$ lies in the behaviour $\calB$.
\end{lemma}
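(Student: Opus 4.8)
The plan is to show that the curve $\eta(t) := \phi_{\mr{\xi}}(X(t))$ obtained by projecting the lifted solution solves exactly the same initial value problem \eqref{eq:ODE_xi} as $\xi(t)$, and then to conclude by uniqueness of solutions to ordinary differential equations.

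First I would record the elementary chain-rule identity that connects the lifted dynamics to the base dynamics. Since $\phi$ is a right action (Assumption \ref{ass:right_handed}), for any $x\in\calM$ and $X,Z\in\grpG$ we have $\phi_x(XZ) = \phi(XZ,x) = \phi(Z,\phi(X,x)) = \phi_{\phi_X(x)}(Z)$, i.e. $\phi_x\circ L_X = \phi_{\phi_X(x)}$ as smooth maps $\grpG\to\calM$. Differentiating at $Z=\Id$ gives $\td\phi_x|_{X}\circ\td L_X = \td\phi_{\phi_X(x)}$, where on the right $\td\phi_{\phi_X(x)}$ denotes, as throughout, the differential of the partial orbit map at the identity. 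This is the one geometric ingredient that needs care with the handedness convention; everything else is routine.

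Next I would differentiate $\eta$ along a trajectory. Using the chain rule, the lifted system \eqref{eq:lifted_system}, and the identity above with $x=\mr{\xi}$,
\[
\dot\eta(t) = \td\phi_{\mr{\xi}}|_{X(t)}\,\dot X(t) = \td\phi_{\mr{\xi}}|_{X(t)}\,\td L_{X(t)}\,\Lambda\bigl(\phi_{\mr{\xi}}(X(t)),v(t)\bigr) = \td\phi_{\eta(t)}\,\Lambda\bigl(\eta(t),v(t)\bigr),
\]
because $\phi_{\mr{\xi}}(X(t)) = \phi_{X(t)}(\mr{\xi}) = \eta(t)$. The defining lift property \eqref{eq:project_Lambda}, $\td\phi_{\eta}\Lambda(\eta,v)=f(\eta,v)$, then yields $\dot\eta(t) = f(\eta(t),v(t))$; that is, $\eta$ is a solution of the input-state ODE \eqref{eq:ODE_xi} on $[0,\infty)$ driven by the given input signal $v(t)\in\calB_V$. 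By hypothesis $\eta(0) = \phi_{\mr{\xi}}(X(0)) = \xi(0)$, so $\eta$ and $\xi$ are solutions of \eqref{eq:ODE_xi} with the same input and the same initial condition.

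Finally, since $f$ is smooth — hence locally Lipschitz in the state, uniformly on compact sets — and the input $v(t)$ is admissible, standard uniqueness of solutions to ordinary differential equations forces $\eta(t)=\xi(t)$ for all $t\ge0$, which is the claim; in particular $(\xi(t),\dot\xi(t))\in\tT\calM$ lies in $\calB$. I do not expect a genuine obstacle: the argument is a chain-rule computation followed by an appeal to uniqueness. The only points demanding attention are getting the composition $\phi_x\circ L_X$ right for a right action, and making sure the regularity of $f$ and of the admissible input class is enough to license uniqueness, so that the assumed lifted solution $X(t)$ projects onto precisely the trajectory $\xi(t)$ and not merely onto some trajectory in $\calB$.
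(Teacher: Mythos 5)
Your proposal is correct and follows essentially the same route as the paper: differentiate the projected curve $\phi_{\mr{\xi}}(X(t))$, use the identity $\td\phi_{\mr{\xi}}\circ\td L_X=\td\phi_{\phi_X(\mr{\xi})}$ together with the lift property \eqref{eq:project_Lambda} to show it satisfies \eqref{eq:ODE_xi}, and conclude by uniqueness of solutions from the matching initial condition. Your explicit verification of the right-handed composition $\phi_{\mr{\xi}}\circ L_X=\phi_{\phi_X(\mr{\xi})}$ is a detail the paper leaves implicit, but the argument is the same.
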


\begin{proof}
Set $\xi(t) = \phi_{\mr{\xi}}(X(t))$ for $X(t)$ a solution to \eqref{eq:lifted_system} with $\phi_{\mr{\xi}}(X(0)) = \xi(0)$.
Observe that
\begin{align*}
\dot{\xi}(t) & = \ddt \phi_{\mr{\xi}}(X(t)) = \td \phi_{\mr{\xi}} \dot{X}(t) = \td \phi_{\mr{\xi}} \td L_X \Lambda(\phi_{\mr{\xi}}(X), v)
 = \td \phi_{\xi} \Lambda(\xi, v)
 = f(\xi,v).
\end{align*}
That is $\xi(t)$ satisfies \eqref{eq:ODE_xi}.
Since the initial conditions match, the result follows from uniqueness of solutions to \eqref{eq:ODE_xi}.
\end{proof}

The lifted system \eqref{eq:lifted_system} is associated to a linear system function $F : V \to \gothX(\grpG)$,
\begin{align}
F(X,v) := \td L_X \Lambda(\phi_{\mr{\xi}}(X), v).
\label{eq:lifted_linear_systems_function}
\end{align}
This is not formally a linear system function representation of the original system according to Definition \ref{def:linear_system_funtion} since the signal space of the associated behaviour is $\tT \grpG$.
We will term this a \emph{lifted linear system function}.

\subsection{Equivariant lifts}\label{sub:equivariant_lift}

If a linear system function is equivariant then it seems reasonable that the lift function (Def.~\ref{lem:system_lift}) and the lifted system \eqref{eq:lifted_system} will also have symmetric structure.
We will use notation $\calf : \calV \to \gothX(\calM)$ in this section to correspond to the common situation where the equivariant linear system function is generated by extension of a linear system function $f : V \to \gothX(\calM)$.

\begin{definition}\label{def:equivarian_lift}
Consider an equivariant linear system function $\calf : \calV \to \gothX(\calM)$.
A lift function $\Lambda$ (Def.~\ref{lem:system_lift}) for $\calf$ is termed \emph{equivariant} if
\begin{align}
\Ad_{X^{-1}} \Lambda(\xi,v) = \Lambda(\phi_X(\xi),\psi_X(v))
\label{eq:equivariant_infinitesimal_lift}
\end{align}
for all $\xi \in \calM$, $v \in \calV$ and $X \in \grpG$.
\end{definition}

The lifted linear system function \eqref{eq:lifted_linear_systems_function} associated with an equivariant lift is equivariant on $\grpG$.

\begin{lemma}\label{lem:equivariance_lifted_system}
Consider an equivariant linear system function $\calf : \calV \to \gothX(\calM)$  (Def~\ref{def:equivariant_model}).
Let $\Lambda$ be a lift function \eqref{lem:system_lift} for $\calf$ and assume it is equivariant (Def. \ref{def:equivarian_lift}).
Then the lifted linear system function $\calF := F$  \eqref{eq:lifted_linear_systems_function} is equivariant with respect to the group action $R$ (right translation) and input action $\psi$.
That is
\begin{align}
\td R_Z \calF(X,v)
= \calF(XZ ,\psi_Z(v))
\label{eq:equivariance_lifted_system}
\end{align}
for all $X,Z\in\grpG$ and $v\in\calV$.
\end{lemma}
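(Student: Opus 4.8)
The plan is to establish \eqref{eq:equivariance_lifted_system} by a direct chain of equalities, starting from the right-hand side $\calF(XZ,\psi_Z(v))$ and transporting it to $\td R_Z\calF(X,v)$. First I would use multiplicativity of left translation, $\td L_{XZ} = \td L_X\,\td L_Z$, together with the definition \eqref{eq:lifted_linear_systems_function} to write
\[
\calF(XZ,\psi_Z(v)) = \td L_X\,\td L_Z\,\Lambda\bigl(\phi_{\mr{\xi}}(XZ),\psi_Z(v)\bigr),
\]
and then exploit the right-action identity (this is exactly where Assumption \ref{ass:right_handed} enters) $\phi_{\mr{\xi}}(XZ) = \phi(XZ,\mr{\xi}) = \phi\bigl(Z,\phi(X,\mr{\xi})\bigr) = \phi_Z\bigl(\phi_X(\mr{\xi})\bigr)$, so that the base point is $\phi_Z(\xi)$ with $\xi := \phi_X(\mr{\xi}) = \phi_{\mr{\xi}}(X)$.

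Next I would invoke the equivariant-lift property \eqref{eq:equivariant_infinitesimal_lift} with $X$ replaced by $Z$, namely $\Lambda\bigl(\phi_Z(\xi),\psi_Z(v)\bigr) = \Ad_{Z^{-1}}\Lambda(\xi,v)$, giving $\calF(XZ,\psi_Z(v)) = \td L_X\,\td L_Z\,\Ad_{Z^{-1}}\Lambda(\xi,v)$. The key algebraic step is then the identity $\Ad_{Z^{-1}} = \td L_{Z^{-1}}\,\td R_Z$ on $\tT_{\Id}\grpG$, which is immediate from \eqref{eq:I_X} and \eqref{eq:Ad_X} since $I_{Z^{-1}} = L_{Z^{-1}}R_Z$; composing on the left with $\td L_Z$ and cancelling $\td L_Z\,\td L_{Z^{-1}} = \id$ yields $\td L_Z\,\Ad_{Z^{-1}} = \td R_Z$, hence $\calF(XZ,\psi_Z(v)) = \td L_X\,\td R_Z\,\Lambda(\xi,v)$.

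Finally, because $L_X$ and $R_Z$ commute as maps $\grpG\to\grpG$, so do their differentials, $\td L_X\,\td R_Z = \td R_Z\,\td L_X$, and therefore
\[
\calF(XZ,\psi_Z(v)) = \td R_Z\,\td L_X\,\Lambda\bigl(\phi_{\mr{\xi}}(X),v\bigr) = \td R_Z\,\calF(X,v),
\]
which is \eqref{eq:equivariance_lifted_system}. The computation is routine once the pieces are aligned; the one point demanding care is the bookkeeping of base points of the tangent maps — in particular that $\Lambda$ takes values in $\gothg = \tT_{\Id}\grpG$, so $\Ad_{Z^{-1}}$ must act before the left translation $\td L_Z$, and that the handedness of $\phi$ is what makes $\phi_{\mr{\xi}}(XZ)$ unfold as $\phi_Z\phi_X(\mr{\xi})$ rather than the reverse; getting that order wrong would produce $\psi_{Z^{-1}}$ in place of $\psi_Z$ and break the identity.
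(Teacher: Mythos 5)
Your proof is correct and is essentially the same computation as the paper's, just run in the opposite direction: the paper starts from $\td R_Z\calF(X,v)$ and uses $\td R_Z\,\td L_X = \td L_{XZ}\Ad_{Z^{-1}}$ together with the lift equivariance \eqref{eq:equivariant_infinitesimal_lift} and the right-action identity $\phi_Z\phi_X(\mr{\xi}) = \phi_{\mr{\xi}}(XZ)$, which are exactly the three ingredients you deploy from the other end. Your care with the base-point bookkeeping and the handedness is well placed and matches the paper's argument.
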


\begin{proof}
One computes
\begin{align*}
\td R_Z \calF(X,v) & = \td R_Z (\td L_X \Lambda(\phi_{\mr{\xi}}(X),v) )
 =  X Z \Ad_{Z^{-1}} \Lambda(\phi_X(\mr{\xi}),v) \\
&  =  R_Z X  \Lambda (\phi_Z \phi_X(\mr{\xi}), \psi_Z(v) )
 =  R_Z X  \Lambda ( \phi_{\mr{\xi}}(R_Z(X)), \psi_Z(v) ) \\
& =  \calF(R_Z(X),\psi_Z(v) )
\end{align*}
which concludes the proof.
\end{proof}

An equivariant lift is uniquely defined by its value at a single point $\mr{\xi} \in \calM$.
In particular, for $\mr{\xi} \in \calM$ fixed and $X \in \grpG$ such that $\phi_X(\mr{\xi}) = \xi$ then
\[
\Lambda(\xi,v) = \Lambda(\phi_X(\mr{\xi}),v) = \Ad_{X^{-1}} \Lambda(\mr{\xi},\psi_{X^{-1}}(v)).
\]
The following lemma provides a characterisation of an equivariant lift.

\begin{lemma}\label{lem:equivariant_lambda}
Consider an equivariant linear system function $\calf : \calV \to \gothX(\calM)$  (Def~\ref{def:equivariant_model}).
Fix an origin $\mr{\xi} \in \calM$.
A lift function $\Lambda$ for $\calf$ is equivariant if and only if there exists a map
$\Lambda_{\mr{\xi}} : V \rightarrow \gothg$ such that for all $v\in\calV$
\begin{align}
\td \phi_{\mr{\xi}} \Lambda_{\mr{\xi}}(v) & = \calf(\mr{\xi},v), \label{eq:velocity_lift_xi0} \\
\Ad_{S^{-1}}\Lambda_{\mr{\xi}}(v) & = \Lambda_{\mr{\xi}}(\psi_S(v)), \quad \textrm{\emph{for all} } S \in \stab_\phi (\mr{\xi}).  \label{eq:stab_equivaraince}
\end{align}
\end{lemma}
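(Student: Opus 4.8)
The plan is to prove the two implications separately. Necessity is a one-line substitution, while sufficiency is a construction whose only real content is a well-definedness check, followed by verification that the constructed map is a lift and is equivariant.

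\emph{Necessity.} Assume $\Lambda$ is an equivariant lift for $\calf$ and set $\Lambda_{\mr{\xi}}(v):=\Lambda(\mr{\xi},v)$. Then \eqref{eq:velocity_lift_xi0} is exactly the defining lift identity \eqref{eq:project_Lambda} evaluated at $\xi=\mr{\xi}$. For $S\in\stab_\phi(\mr{\xi})$ the equivariance identity \eqref{eq:equivariant_infinitesimal_lift} taken at $X=S$ gives $\Ad_{S^{-1}}\Lambda_{\mr{\xi}}(v)=\Lambda(\phi_S(\mr{\xi}),\psi_S(v))=\Lambda(\mr{\xi},\psi_S(v))=\Lambda_{\mr{\xi}}(\psi_S(v))$, which is \eqref{eq:stab_equivaraince}. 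Linearity of $\Lambda_{\mr{\xi}}$ is inherited from linearity of $\Lambda$ in its $\calV$ argument.

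\emph{Sufficiency.} Given $\Lambda_{\mr{\xi}}$ satisfying \eqref{eq:velocity_lift_xi0}--\eqref{eq:stab_equivaraince}, I would define, for $\xi\in\calM$ and $v\in\calV$, the candidate $\Lambda(\xi,v):=\Ad_{X^{-1}}\Lambda_{\mr{\xi}}(\psi_{X^{-1}}(v))$, where $X\in\grpG$ is any element with $\phi_X(\mr{\xi})=\xi$ (such $X$ exists by transitivity of $\phi$); this is the formula already anticipated in the remark preceding the lemma. The main obstacle is well-definedness, and I expect the bookkeeping of the right-action composition laws ($\psi_A\psi_B=\psi_{BA}$, $\phi_A\phi_B=\phi_{BA}$) and of the homomorphism property of $\Ad$ to be the only real subtlety. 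Concretely: if $\phi_X(\mr{\xi})=\phi_Y(\mr{\xi})$ then $S:=XY^{-1}\in\stab_\phi(\mr{\xi})$, so $X=SY$; substituting and using $\Ad_{(SY)^{-1}}=\Ad_{Y^{-1}}\Ad_{S^{-1}}$, $\psi_{(SY)^{-1}}=\psi_{S^{-1}}\psi_{Y^{-1}}$, then \eqref{eq:stab_equivaraince} applied with argument $w=\psi_{S^{-1}}\psi_{Y^{-1}}(v)$ together with $\psi_S\psi_{S^{-1}}=\psi_\Id=\id$, collapses the expression built from $X$ into the one built from $Y$. Well-definedness also yields smoothness (patch together smooth local sections of the submersion $\phi_{\mr{\xi}}$), and linearity in $v$ follows from linearity of $\Lambda_{\mr{\xi}}$, $\psi_{X^{-1}}$ and $\Ad_{X^{-1}}$.

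\emph{The constructed $\Lambda$ is an equivariant lift.} For the lift property, write $\xi=\phi_Y(\mr{\xi})$; Proposition \ref{prop:homog_vel_transf} applied via \eqref{eq:homog_commutative} gives $\td\phi_\xi\Lambda(\xi,v)=\td\phi_Y\td\phi_{\mr{\xi}}\Lambda_{\mr{\xi}}(\psi_{Y^{-1}}(v))$, and then \eqref{eq:velocity_lift_xi0} followed by the equivariance \eqref{eq:psi_compatible} of $\calf$ yields $\td\phi_Y\calf(\mr{\xi},\psi_{Y^{-1}}(v))=\calf(\phi_Y(\mr{\xi}),\psi_Y\psi_{Y^{-1}}(v))=\calf(\xi,v)$, so $\Lambda$ is a lift in the sense of Definition \ref{lem:system_lift}. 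For equivariance, given $X\in\grpG$ and $\xi=\phi_Y(\mr{\xi})$ one has $\phi_X(\xi)=\phi_{YX}(\mr{\xi})$; evaluating $\Lambda$ at $\phi_X(\xi)$ using the group element $YX$ and simplifying $\Ad_{(YX)^{-1}}=\Ad_{X^{-1}}\Ad_{Y^{-1}}$ and $\psi_{(YX)^{-1}}\psi_X=\psi_{Y^{-1}}$ produces $\Lambda(\phi_X(\xi),\psi_X(v))=\Ad_{X^{-1}}\Lambda(\xi,v)$, which is \eqref{eq:equivariant_infinitesimal_lift}. Everything outside the well-definedness argument is direct substitution into identities already established in the excerpt.
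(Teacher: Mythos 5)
Your proposal is correct and follows essentially the same route as the paper's own proof: necessity by restricting an equivariant lift to the origin, and sufficiency by the construction $\Lambda(\phi_X(\mr{\xi}),v):=\Ad_{X^{-1}}\Lambda_{\mr{\xi}}(\psi_{X^{-1}}(v))$, with the well-definedness check via the stabilizer identity \eqref{eq:stab_equivaraince}, the lift property via Proposition~\ref{prop:homog_vel_transf}, and equivariance by the same $\Ad$/$\psi$ composition bookkeeping. Your brief remark on smoothness via local sections of the submersion $\phi_{\mr{\xi}}$ is a small detail the paper leaves implicit, but the argument is otherwise the same.
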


\begin{proof}
Firstly, we prove ``only if''.
If $\Lambda$ is an equivariant lift then define $\Lambda_{\mr{\xi}} (v) := \Lambda(\mr{\xi},v)$ and note
\begin{align*}
\Ad_{S^{-1}} \Lambda_{\mr{\xi}}(v) & = \Ad_{S^{-1}} \Lambda (\mr{\xi}, v)
 = \Lambda (\phi_S (\mr{\xi}),\psi_S(v))
 = \Lambda (\mr{\xi},\psi_S(v))
 = \Lambda_{\mr{\xi}}(\psi_S(v)).
\end{align*}

To prove ``if'' define a function
\begin{align}
\Lambda(\xi,v)  = \Lambda(\phi_X(\mr{\xi}),v) := \Ad_{X^{-1}} \Lambda_{\mr{\xi}}(\psi_{X^{-1}}(v))
\label{eq:lambda_construct_one}
\end{align}
where $X \in \grpG$ is any element such that $\xi = \phi_X(\mr{\xi})$.
First, it is necessary to verify that $\Lambda$ is a well defined map since \eqref{eq:lambda_construct_one} involves an element $X \in \grpG$ that is not unique unless $\stab_\phi (\mr{\xi})$ is trivial.
All other $Z\in\grpG$ with $\xi=\phi_Z(\mr{\xi})$ are of the form $Z=SX$ with $S\in \stab_\phi (\mr{\xi})$.
One verifies that for a given $X \in \grpG$ such that $\phi_X(\mr{\xi}) = \xi$ and any $S \in \stab_\phi (\mr{\xi})$ then
\begin{align}
\Lambda(\phi_{SX}(\mr{\xi}),v) & = \Ad_{(SX)^{-1}} \Lambda_{\mr{\xi}}(\psi_{(SX)^{-1}}(v)) \notag \\
& = \Ad_{X^{-1}} \Lambda_{\mr{\xi}}(\psi_S \psi_{S^{-1}} \psi_{X^{-1}}(v))
 = \Lambda(\phi_X(\mr{\xi}),v) \label{eq:lamndaprimeproof}
\end{align}
where the step to \eqref{eq:lamndaprimeproof} depends on \eqref{eq:stab_equivaraince}.

To see that $\Lambda$ is a lift compute
\begin{align}
\td \phi_\xi \Lambda(\xi,v) & = \td \phi_\xi \Ad_{X^{-1}} \Lambda_{\mr{\xi}}(\psi_{X^{-1}}(v)) \notag \\
& = \td \phi_X \td \phi_{\mr{\xi}} \Lambda_{\mr{\xi}}(\psi_{X^{-1}}(v)) \label{eq:using_homog_vel_prop} \\
& = \td \phi_X \calf(\mr{\xi},\psi_{X^{-1}}(v))
 =  \calf(\xi,v) \notag
\end{align}
where line \eqref{eq:using_homog_vel_prop} follows from Proposition \ref{prop:homog_vel_transf}, and the last two steps follow from \eqref{eq:velocity_lift_xi0} and equivariance of $\calf$, respectively.

To see that $\Lambda(\xi,v)$ is equivariant, compute
\begin{align}
\Ad_{Z^{-1}} \Lambda(\xi,v) & = \Ad_{Z^{-1}} \Ad_{X^{-1}} \Lambda_{\mr{\xi}}(\psi_{X^{-1}}(v)) 
 = \Ad_{(XZ)^{-1}} \Lambda_{\mr{\xi}}(\psi_{ (XZ)^{-1}} \psi_{Z }(v)) \notag \\
&  =  \Lambda(\phi_{XZ}(\mr{\xi}), \psi_{Z }(v)) 
 =  \Lambda(\phi_{Z} (\xi), \psi_{Z }(v)) \label{eq:lambdaprimeproofequi}
\end{align}
where line \eqref{eq:lambdaprimeproofequi} follows from definition \eqref{eq:lambda_construct_one} with $X$ replaced by $XZ$, and the final line uses the definition of $\xi = \phi_{X}(\mr{\xi})$.
This completes the proof.
\end{proof}

Note that if $\stab_\phi(\xi)$  is trivial (the group action is free) then \eqref{eq:stab_equivaraince} holds implicitly.
Indeed, if  $\stab_\phi(\mr{\xi})$ is trivial then $\td \phi_{\mr{\xi}}$ is an isomorphism and
\begin{align}
\Lambda_{\mr{\xi}}(v) := \td \phi_{\mr{\xi}}^{-1} f(\mr{\xi},v)
\end{align}
is well defined and \eqref{eq:lambda_construct_one} defines an equivariant lift.
It is easily verified that in this case
\[
\Lambda(\xi,v) = \td \phi_\xi^{-1} f(\xi,v)
\]
for all $\xi\in\calM$ and $v\in\calV$.
In particular, for kinematic systems with Lie-group as state-space the lift function is just right translation of the system function back to the identity tangent space \cite{2020_Mahony_mtns}.
For a more general system where the state-space is not a Lie-group torsor, then existence of an equivariant lift is not obvious.

\begin{theorem}\label{th:equivariant_lift_existance}
Consider an equivariant linear system function $\calf : \calV \to \gothX(\calM)$ (Def.~\ref{def:equivariant_model}).
Then an equivariant lift $\Lambda : \calM \times \calV \to \gothg$ (Def.~\ref{def:equivarian_lift}) always exists.
\end{theorem}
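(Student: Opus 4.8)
The plan is to reduce the problem, via Lemma~\ref{lem:equivariant_lambda}, to the construction of a single linear map $\Lambda_{\mr{\xi}}:\calV\to\gothg$ at a fixed origin $\mr{\xi}\in\calM$ obeying \eqref{eq:velocity_lift_xi0} and \eqref{eq:stab_equivaraince}, and then to build such a $\Lambda_{\mr{\xi}}$ from an invariant complement of the isotropy algebra. Write $H:=\stab_\phi(\mr{\xi})$ and $\gothh:=\tT_\Id H\subset\gothg$. I would first record two consequences of the hypotheses. Since $\phi_S(\mr{\xi})=\mr{\xi}$ for every $S\in H$, equivariance of $\calf$ (equation \eqref{eq:psi_compatible}) yields $\td\phi_S\,\calf(\mr{\xi},v)=\calf(\mr{\xi},\psi_S(v))$, so that $\calf(\mr{\xi},\cdot):\calV\to\tT_{\mr{\xi}}\calM$ intertwines $\psi|_H$ with the isotropy representation $S\mapsto\td\phi_S$ on $\tT_{\mr{\xi}}\calM$. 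And Proposition~\ref{prop:homog_vel_transf} applied with $X=S\in H$ gives $\td\phi_S\,\td\phi_{\mr{\xi}}U=\td\phi_{\mr{\xi}}\Ad_{S^{-1}}U$, so $\td\phi_{\mr{\xi}}:\gothg\to\tT_{\mr{\xi}}\calM$ is a surjection with kernel exactly $\gothh$ intertwining $S\mapsto\Ad_{S^{-1}}$ with the same isotropy representation.

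The core step is to choose a linear subspace $\mathfrak{m}\subset\gothg$ complementary to $\gothh$, so that $\td\phi_{\mr{\xi}}|_{\mathfrak{m}}:\mathfrak{m}\to\tT_{\mr{\xi}}\calM$ is an isomorphism, and invariant under the isotropy action in the sense $\Ad_{S^{-1}}\mathfrak{m}\subseteq\mathfrak{m}$ for all $S\in H$. I would obtain $\mathfrak{m}$ by averaging over $H$: take any linear complement of $\gothh$, form the projection $\gothg\to\gothg$ along it with range $\gothh$, and replace it by the average of its conjugates $P\mapsto\Ad_{S^{-1}}P\,\Ad_{S}$ against a normalised invariant mean on $H$. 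The averaged operator is still a projection onto $\gothh$ (it fixes $\gothh$ pointwise because $\gothh$ is $\Ad_H$-invariant) and is $\Ad_H$-equivariant, so its kernel $\mathfrak{m}$ is the desired invariant complement.

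With $\mathfrak{m}$ in hand I would set $\Lambda_{\mr{\xi}}(v):=(\td\phi_{\mr{\xi}}|_{\mathfrak{m}})^{-1}\calf(\mr{\xi},v)$, a linear map satisfying \eqref{eq:velocity_lift_xi0} by construction. To check \eqref{eq:stab_equivaraince} fix $S\in H$ and $v\in\calV$: both $\Ad_{S^{-1}}\Lambda_{\mr{\xi}}(v)$ and $\Lambda_{\mr{\xi}}(\psi_S(v))$ lie in $\mathfrak{m}$ (the former because $\mathfrak{m}$ is $\Ad_H$-invariant), and applying the injective map $\td\phi_{\mr{\xi}}|_{\mathfrak{m}}$ to both gives, using $\td\phi_{\mr{\xi}}\Ad_{S^{-1}}=\td\phi_S\td\phi_{\mr{\xi}}$ on the first and $\td\phi_S\calf(\mr{\xi},v)=\calf(\mr{\xi},\psi_S(v))$ on the second, the common value $\td\phi_S\calf(\mr{\xi},v)$; hence the two vectors agree. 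Lemma~\ref{lem:equivariant_lambda} then upgrades $\Lambda_{\mr{\xi}}$ to an equivariant lift $\Lambda:\calM\times\calV\to\gothg$, which is what is claimed.

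The step I expect to be the main obstacle is the construction of the $H$-invariant complement $\mathfrak{m}$. When $H$ is compact the averaging above works verbatim with normalised Haar measure, covering in particular the free case and the compact-isotropy systems typical of attitude, pose and SLAM applications. For a general closed isotropy subgroup the naive averaging breaks down (for instance for parabolic stabilisers, as arise in homography kinematics on projective space), and one must instead restrict the splitting problem to $\td\phi_{\mr{\xi}}^{-1}\big(\image\calf(\mr{\xi},\cdot)\big)$ and exploit the full $\grpG$-equivariance of $\calf$ — not merely its $H$-equivariance — to see that the obstruction class, living in $H^1\!\big(H,\mathrm{Hom}(\calV,\gothh)\big)$, vanishes; alternatively one can reduce from $H$ to a maximal compact subgroup using that $H$ deformation retracts onto it. I would devote the bulk of the argument to this point.
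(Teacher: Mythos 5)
Your reduction of the theorem to the construction of a single linear map $\Lambda_{\mr{\xi}}:\calV\to\gothg$ satisfying \eqref{eq:velocity_lift_xi0} and \eqref{eq:stab_equivaraince}, followed by an appeal to Lemma~\ref{lem:equivariant_lambda}, matches the paper's strategy, and your verification that $\Lambda_{\mr{\xi}}:=(\td\phi_{\mr{\xi}}|_{\mathfrak{m}})^{-1}\circ\calf(\mr{\xi},\cdot)$ has both properties \emph{once} an $\Ad_{\stab_\phi(\mr{\xi})}$-invariant complement $\mathfrak{m}$ of $\gothh=\ker\td\phi_{\mr{\xi}}$ is in hand is correct. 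The gap is exactly where you locate it, and it is not a removable technicality: a reductive splitting $\gothg=\gothh\oplus\mathfrak{m}$ does not exist for general transitive actions. The paper's own motivating example of homography kinematics, $\SL(3)$ acting on $\RP^2$ (or already $\SL(2,\R)$ acting on $\RP^1$), has parabolic stabilizer, and a short computation with the unipotent part of the upper-triangular subgroup of $\SL(2,\R)$ shows that no $\Ad$-invariant complement of the Borel subalgebra exists. Your two proposed repairs are asserted rather than carried out and, as stated, do not close the gap: the vanishing of the obstruction class in $H^1\bigl(\stab_\phi(\mr{\xi}),\mathrm{Hom}(\calV,\gothh)\bigr)$ does not follow formally from $\grpG$-equivariance of $\calf$ and would need a genuine argument, while passing to a maximal compact subgroup $K\subset\stab_\phi(\mr{\xi})$ only yields an $\Ad_K$-invariant complement, which need not be invariant under the full stabilizer and hence does not give \eqref{eq:stab_equivaraince} for all $S\in\stab_\phi(\mr{\xi})$.

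The paper avoids the splitting problem in $\gothg$ entirely by working on the input side: it builds $\Lambda_{\mr{\xi}}$ on an ascending chain of $\psi_{\stab_\phi(\mr{\xi})}$-invariant subspaces of $\calV$, choosing for each new germ vector $v_{j+1}$ an arbitrary preimage of $\calf(\mr{\xi},v_{j+1})$ under the surjection $\td\phi_{\mr{\xi}}$, propagating along the orbit by $\Lambda_{\mr{\xi}}(\psi_S(v_{j+1})):=\Ad_{S^{-1}}\Lambda_{\mr{\xi}}(v_{j+1})$, extending linearly, and invoking Zorn's lemma to exhaust $\calV$; no invariant complement is ever required. Your construction is cleaner, and indeed canonical, when the stabilizer is compact (attitude, pose, and many SLAM formulations), but as a proof of the theorem as stated it is incomplete precisely on the non-reductive cases that the general statement is meant to cover.
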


\begin{proof}
Fix a point $\mr{\xi} \in \calM$.
We construct a function $\Lambda_{\mr{\xi}} : \calV \to \gothg$ by defining it successively on domains given by a chain of linear subspaces
$\calS_1 \subset \calS_2 \cdots \subseteq \calV$.

Choose an element $v_1 \in \calV$.
Since $\phi$ is transitive, then it is always possible to find an image  point  $\Lambda_{\mr{\xi}}(v_1) \in \gothg$ such that
$\td \phi_{\mr{\xi}} \Lambda_{\mr{\xi}} (v_1) = \calf(\mr{\xi},v_1)$.
By linearity then $\Lambda_{\mr{\xi}} (\alpha v_1) = \alpha \Lambda_{\mr{\xi}} (v_1)$ for all $\alpha\in\R$ and $\Lambda_{\mr{\xi}}$ is defined on the domain $\spn \{v_1\}$.

Define a $\psi_{\stab_\phi(\mr{\xi})}$ invariant subspace $\calS_1$ containing $v_1$ by
$\calS_1 = \spn \{ \psi_S(v_1) \;|\; S \in \stab_\phi(\mr{\xi}) \}$.
Define $\Lambda_{\mr{\xi}}$ restricted to the domain $\calS_1$ by the linear extension of the function defined by
$\Lambda_{\mr{\xi}} (\psi_S(v_1)) := \Ad_{S^{-1}} \Lambda_{\mr{\xi}} (v_1)$.
To see that this is well defined then one must verify that if $\psi_{S}(v_1) = \psi_T(v_1)$ for $S,T\in\stab_\phi(\mr{\xi})$ then $\Ad_{S^{-1}} \Lambda_{\mr{\xi}} (v_1) =
\Ad_{T^{-1}} \Lambda_{\mr{\xi}} (v_1)$.
However, in this case $\psi_{ST^{-1}}(v_1) = v_1$ and hence $ST^{-1}\in\stab_\phi(\mr{\xi})$ and
\begin{align}
\Ad_{T^{-1}} \Lambda_{\mr{\xi}} (v_1) & = \Ad_{T^{-1}} \Lambda_{\mr{\xi}} (\psi_{ST^{-1}}(v_1)) 
 = \Ad_{T^{-1}} \Ad_{T S^{-1}} \Lambda_{\mr{\xi}} (v_1) 
 = \Ad_{S^{-1}} \Lambda_{\mr{\xi}} (v_1) \label{eq:pf:Lambda_well_defined}
\end{align}
as required.

Note that
\begin{align}
\td \phi_{\mr{\xi}} \Lambda_{\mr{\xi}}(\psi_S(v_1))
& = \td \phi_{\mr{\xi}} \Ad_{S^{-1}} \Lambda_{\mr{\xi}}(v_1)
 = \left. \tD_X \right|_{\Id} \phi_{\mr{\xi}} \circ I_{S^{-1}} (X) [\Lambda_{\mr{\xi}}(v_1)] \notag\\
& = \td \phi_{S} \td \phi_{\mr{\xi}} \Lambda_{\mr{\xi}}(v_1) 
 = \td \phi_{S} \calf(\mr{\xi},v_1)
 = \calf ( \mr{\xi},\psi_{S}(v_1)) \label{eq:pf:Lambda_projection}
\end{align}
where the fact that $\phi_{S^{-1}}(\mr{\xi}) = \mr{\xi} = \phi_{S}(\mr{\xi})$ is used in the last two lines and the last line follows from equivariance of $\calf$.
By linearity this property extends to any element $w \in \calS_1$ and $\Lambda_{\mr{\xi}}$ has the lift property \eqref{eq:velocity_lift_xi0} on $\calS_1$. Property \eqref{eq:stab_equivaraince} holds by construction.

The proof proceeds by induction.
Let $\calS_j \subset \calV$ be a $\psi_{\stab_{\phi}(\mr{\xi})}$ invariant subspace on which $\Lambda_{\mr{\xi}}$ is defined.
If $\calS_j \not= \calV$ then choose $v_{j+1} \in \calV \setminus \calS_j$.
Find $\Lambda_{\mr{\xi}}(v_{j+1}) \in \gothg$ such that
$\td \phi_{\mr{\xi}} \Lambda_{\mr{\xi}} (v_{j+1}) = \calf(\mr{\xi},v_{j+1})$.
Noting that $\Lambda_{\mr{\xi}} (v_{j+1}) \in \gothg$ can be any element and does not need to be disjoint from the previous choices.
Define a subspace
$\calS'_{j+1} = \spn \{ \psi_S(v_{j+1}) \;|\; S \in \stab_\phi(\mr{\xi}) \}$
and note that $\calS'_{j+1}$ is $\psi_{\stab_\phi(\mr{\xi})}$  invariant.
Note furthermore that $\calS'_{j+1} \cap \calS_{j} = \{0\}$ since otherwise $\calS_{j}$ would not have been $\psi_{\stab_\phi(\mr{\xi})}$  invariant.
Extend $\Lambda_{\mr{\xi}}$ to $\calS'_{j+1}$ by
$\Lambda_{\mr{\xi}} (\psi_S(v_{j+1})) := \Ad_{S^{-1}} \Lambda_{\mr{\xi}} (v_{j+1})$ and linear extension.
Repeating the computation \eqref{eq:pf:Lambda_well_defined} verifies that $\Lambda_{\mr{\xi}}$ is well defined and repeating \eqref{eq:pf:Lambda_projection} verifies that the extended function $\Lambda_{\mr{\xi}}$ has the lift property.
Define
$\calS_{j+1} = \spn \{ \calS'_{j+1} + \calS_j \} \subset \calV$.
The lift $\Lambda_{\mr{\xi}}$ extends to $\calS_{j+1}$ through linearity.
It follows that $\calS_1 \subset \calS_2 \cdots$ is a strictly ascending chain of $\psi_{\stab_{\phi}(\mr{\xi})}$ invariant subspaces contained in $\calV$.
Let $\calS^1 = \cup_{j = 1}^\infty \calS_j$ be the limiting set and
note that $\Lambda_{\mr{\xi}}$ is well defined on $\calS^1$.

If $\calS^1 = \calV$ the proof terminates.
Otherwise, consider a point $v^2_1 \in \calV$ that is disjoint from $\calS^1$.
This new initial vector can be used to construct a new strictly ascending chain of subspaces $\calS^1 \subset \calS^2_1 \subset \calS^2_2 \cdots$ and lead eventually to a union of all these sets $\calS^2$.
The process can be continued \textit{ad-infinitum} to generate a collection of sets such that the union of any ascending chain is also an element of the collection.
Applying Zorn's lemma one concludes that there is a maximal element $\calS^\star$ of this collection.
However, if the maximal set $\calS^\star \not= \calV$ then one can always choose a new element $v^\star \in \calV \setminus \calS^\star $ and use the
same process to extend the collection beyond $\calS^\star$, a contradiction that $\calS^\star$ was a maximal element.
It follows that $\calS^\star = \calV$ and the resulting function $\Lambda_{\mr{\xi}} : \calV \to \gothg$ is defined on the whole input space $\calV$.

By construction, $\Lambda_{\mr{\xi}}$ satisfies \eqref{eq:velocity_lift_xi0} and \eqref{eq:stab_equivaraince} and the result follows from Lemma \ref{lem:equivariant_lambda}.
\end{proof}

The construction in Theorem \ref{th:equivariant_lift_existance} is finite if $\calV$ is finite-dimensional.
In practice, real kinematic models have the original finite-dimensional linear system function $f : \vecV \to \gothX(\calM)$ that is compatible with a complete velocity output (Def.~\ref{def:complete_output_g}) by assumption.
Clearly, this finite dimensional subspace will be the first point of call for choosing the germ vectors $v_i$ and a finite number of iterations will ensure that $\Lambda_{\mr{\xi}}$ will be fully defined on $\vecV$ and indeed on its extension by the stabilizer
\[
\vecV_{\stab_\phi(\mr{\xi})} = \{ w = \psi_S(v) \; |\; v \in \vecV, S \in \stab_\phi(\mr{\xi}) \}.
\]
It turns out that this development is sufficient for the implementation of practical observers and filters since the velocities measured always lie in $\vecV$.
Thus, although it is important to know that an equivariant lift exists, one only needs to compute the lift on a finite dimensional subspace of the input extension to implement a practical observer.

\section{Observer design}\label{sec:Observer_design}

The observer problem considered is to design a dynamical system, the observer, whose inputs are measurements taken from a kinematic system $\calB$ (Def.~\ref{def:kinematic_system}) and whose output $\hat{\xi}(t) \in \calM$ converges to the kinematic system's true state $\xi(t) \in \calB$ in a suitable manner.
This section provides solutions to steps \ref{item:observer_architecture}, \ref{item:error_dynamics} and \ref{item:observer_design} in the proposed observer design methodology \S\ref{sub:design_outline}.

\subsection{Observer Architecture}

\begin{figure}[h]
\begin{center}
\includegraphics[scale=0.4167]{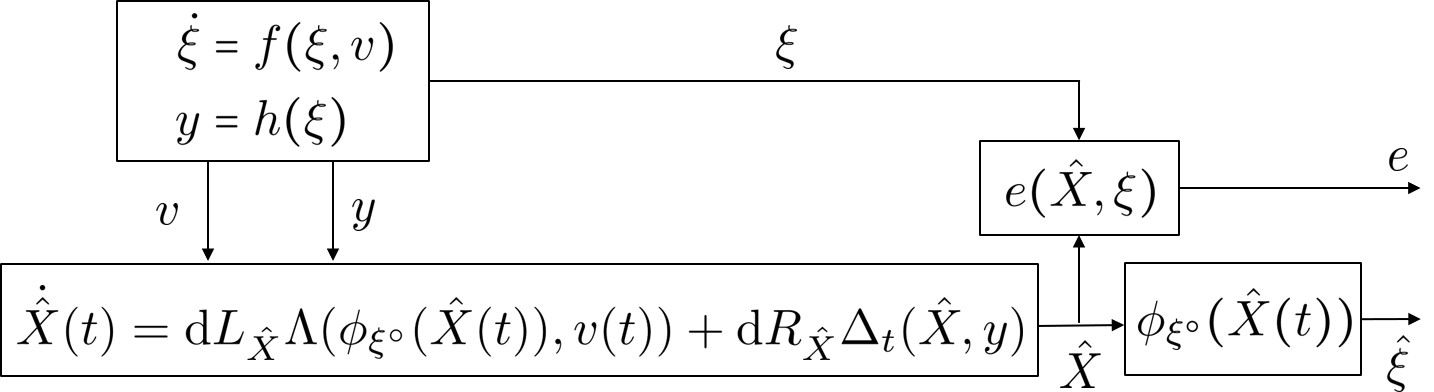}
\caption{
Proposed observer architecture.
The observer state $\hat{X} \in \grpG$ has internal model dynamics given by the lifted system \eqref{eq:lifted_system} and innovation function $\Delta_t : \grpG \times \calN \to \gothg$.
The observer state estimate function is $\phi_{\mr{\xi}} : \calG \to \calM$.
The error function $e : \grpG \times \calM \to \calM$ (Def.~\ref{def:error_function}) is a design signal that measures convergence of the observer state.
}
\label{fig:ObserverBD_v01}
\end{center}
\end{figure}

The proposed observer architecture is shown in Figure \ref{fig:ObserverBD_v01}.
In addition to the material developed earlier in the paper, configuration outputs, innovation functions, and error functions are required to formulate this architecture.
In this section, we discuss configuration outputs and innovation functions.
Error functions are discussed in Sections \S\ref{sub:error} and \S\ref{sub:invariant_error}.

\begin{definition}\label{def:config_output}
A \emph{configuration output} is a smooth map $h : \calM \rightarrow \calN$ into a smooth manifold $\calN$.
The configuration output is denoted by $y = h(\xi)$.
\end{definition}

The observer state is posed on the symmetry Lie group $\grpG$ and the lifted system is used as an internal model.
The proposed observer architecture is a classical internal model plus innovation design \eqref{eq:basic_observer}.

\begin{definition}\label{def:observer_architecture}
Consider a kinematic model (Def. \ref{def:kinematic_model}) with homogeneous state (Def.~\ref{def:symmetry_group}) and lifted system \eqref{eq:lifted_system}.
Choose an origin $\mr{\xi} \in \calM$.
The observer is defined to be
\begin{align}
\dot{\hat{X}}(t) = \td L_{\hat{X}} \Lambda(\phi_{\mr{\xi}} (\hat{X}(t)), v(t)) + \td R_{\hat{X}} \Delta_t(\hat{X},y) \quad \hat{X}(0) \in \grpG
\label{eq:basic_observer}
\end{align}
where $\hat{X} \in \grpG$ is the \emph{observer state} and $\Delta_t : \grpG \times \calN \rightarrow \gothg$ is a time varying \emph{innovation function} that remains to be defined.
The corresponding state estimate will be given by
\[
\hat{\xi}(t) := \phi_{\mr{\xi}} (\hat{X}(t)).
\]
\end{definition}

If $\hat{X}$ has the property that $\hat{\xi} = \phi_{\hat{X}}(\mr{\xi}) = \xi$ then choosing the innovation function $\Delta_t(\hat{X},y) = 0$ ensures that the trajectories of the observer state estimate track the trajectories of the true system state (Lemma~\ref{lem:project_system}).
In general, the innovation function\footnote{%
In classical observer design, the term innovation refers to the \emph{output residue} $(y - h(\hat{\xi}))$.
However, such a construction is only well defined when the output space is linear.
In a stochastic observer, the residue is mapped via a `Kalman' gain matrix $K_t$ to generate a correction term $K_t( y - h(\phi_{\hat{X}}(\mr{\xi})))$ that is added to the internal model in the observer ODE .
While a clean concept of the output residue requires additional structure, the \emph{innovation function} $\Delta_t : \grpG \times \calN \to \gothg$,
combining the Kalman gain and output residue concepts from classical filter design,
is a powerful formulation that aides a wide range of design approaches in equivariant systems theory for observer design.
}
is chosen to steer the trajectory $\hat{X}(t)$ to incrementally reduce an \emph{error} $e(\hat{X},\xi)$, from observer state to system state.
Choosing a suitable error is a crucial step in the design process that is discussed in Section \S\ref{sub:error} and \S\ref{sub:invariant_error}.
The overall architecture of the observer and system kinematics is shown in Figure \ref{fig:ObserverBD_v01}.

\subsection{Error functions}\label{sub:error}

In classical observer design, on linear spaces or in local coordinates, the ubiquitous choice of error is the difference $(\xi - \hat{\xi})$.
However, this error does not generalise well to the equivariant case.
Firstly, for an equivariant observer the observer state is $\hat{X} \in \grpG$ and not an element of the system state space $\calM$.
Even if the substitution $\hat{\xi} = \phi_{\mr{\xi}}(\hat{X})$ is made, the difference operation depends on linear or local coordinates for the state.
Finally, the naive error does not carry natural invariance properties and this destroys structure in the error dynamics (derived in Section~\S\ref{sub:observer_design}) that underly the effectiveness of the overall approach.
Choosing the appropriate error is a crucial design choice in equivariant systems theory for observer design.

To motivate the invariant error that we propose in Section \S\ref{sub:invariant_error}, we begin by a short discussion of the nature of error functions in general.
This digression is important since there are very few observer design papers that consider different observer and system state spaces, and as such the concepts behind defining an error between signals from different spaces with different dimensions are not consistent in the literature.

\begin{definition}\label{def:error_function}
Consider an observer with state space $\calG$ a smooth manifold, for a system with state space $\calM$ a smooth manifold, and with state estimate given by a smooth submersion $\Phi_0 : \calG \to \calM$.
A \emph{global error function} is a smooth function $e :\calG \times \calM \to \calM$ such that
\begin{itemize}
\item[i)]
the family of partial maps $e_{\hat{X}} :  \calM \to \calM$,
\begin{align}
e_{\hat{X}}(\xi) := e(\hat{X}, \xi)
\label{eq:error_represent}
\end{align}
are diffeomorphisms.

\item[ii)]
the family of partial maps $e_{\xi} :  \calG \to \calM$,
\begin{align}
e_{\xi}(\hat{X}) := e(\hat{X}, \xi)
\label{eq:error_submersive}
\end{align}
are submersions.
\end{itemize}
A global error function is termed \emph{consistent} with the observer if
there exists a constant $\mr{e} \in \calM$ for which
\begin{align}
e(\hat{X}, \Phi_0 (\hat{X}) ) = \mr{e}
\label{eq:error_consistent}
\end{align}
for all $\hat{X}\in\calG$.
\end{definition}

Given a consistent error function satisfying Def.~\ref{def:error_function} then if
$e(\hat{X},\xi)  = \mr{e}$ it follows that
\begin{align*}
e_{\hat{X}} (\xi) & = e(\hat{X},\xi)  = \mr{e} =  e(\hat{X}, \Phi_0 (\hat{X}) ) =  e_{\hat{X}} (\Phi_0 (\hat{X}))
\end{align*}
Since $e_{\hat{X}} : \calM \to \calM$ is a diffeomorphism, applying its inverse to both sides of the equation yields $\Phi_0(\hat{X}) = \xi$.
Thus, the observer objective will be to design the innovation to force $e(\hat{X},\xi) \to \mr{e}$.
Conditions \eqref{eq:error_represent} and \eqref{eq:error_submersive} ensure that the error is a global and effective measure of the observer state convergence.
It is easy to verify that the classical error $(\xi - \hat{\xi})$ is a consistent global error function that satisfies Def.~\ref{def:error_function} for systems on $\R^n$ with the identity state estimate map.

\subsection{Invariant Errors}\label{sub:invariant_error}

\begin{definition}\label{def:error_invariant}
Let $\phi : \grpG \times \calM \to \calM$ be a group action on a smooth state space manifold $\calM$ (Def.~\ref{def:symmetry_group}).
Let $\phi': \grpG \times \calG \to \calG$ be a transitive group action on a smooth observer state space manifold.
A global error function $e : \calG \times \calM \to \calM$ (Def.~\ref{def:error_function}) is said to be invariant if
\begin{align}\label{eq:error_gen_invariance}
e(\phi'_A(\hat{X}), \phi_A(\xi)) =  e(\hat{X}, \xi), \quad \text{ for all } A \in \grpG.
\end{align}
\end{definition}

Invariance in an error is highly desirable as it decouples error behaviour from the actual state of the system and measures only the relative estimate-to-state error.
An observer designed to decrease such an error is well conditioned over all possible initial conditions.
Without invariance, different initial conditions lead to different error behaviour for the same input signals and gain settings.
Tuning the gains for an observer derived from a non-invariant error becomes a case-by-case trajectory-by-trajectory problem requiring knowledgable engineers, specific targeted gains for different scenarios, and many corner cases for when the gains fail.
A key motivation for lifting onto the symmetry group and defining the internal model in group coordinates comes from the ability to use the group multiplication and group action to define an invariant observer error.

\begin{definition}\label{def:error}
Consider a kinematic model with homogeneous state (Def.~\ref{def:symmetry_group}) and lifted model
\eqref{eq:lifted_system}.
The \emph{state error} $e : \grpG \times \calM \rightarrow \calM$ is defined to be
\begin{align}\label{eq:e}
e = e(\hat{X},\xi) := \phi(\hat{X}^{-1},\xi).
\end{align}
The \emph{group error} $E: \grpG \times \grpG \rightarrow \grpG$ is defined to be
\begin{align}\label{eq:E}
E = E(\hat{X}, X) := X \hat{X}^{-1}.
\end{align}
\end{definition}

Recall that the lifted system \eqref{eq:lifted_system} is equivariant with respect to right translation (Lemma~\ref{lem:equivariance_lifted_system}) and note that the group error \eqref{eq:E} can also be written $E = R_{\hat{X}^{-1}} X$ in a similar form to \eqref{eq:e}.

\begin{lemma}\label{lem:invariant_error}
Consider a kinematic model with homogeneous state (Def.~\ref{def:symmetry_group}) and lifted model
\eqref{eq:lifted_system}.
The state error \eqref{eq:e} is an invariant (Def.~\ref{def:error_invariant}) and consistent global error (Def.~\ref{def:error_function}) with respect to observer state space group action $R : \grpG \times \grpG \to \grpG$ and system state space group action $\phi : \grpG \times \calM \to \calM$.
The group error \eqref{eq:E} is an invariant and consistent global error with respect to observer state space group action $R : \grpG \times \grpG \to \grpG$ and lifted system state space group action $R : \grpG \times \grpG \to \grpG$.
\end{lemma}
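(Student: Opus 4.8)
The plan is to verify each of the four claimed properties (invariance and consistency, for each of the two errors) by direct computation, leaning on the axioms of a group action and the definitions of the state estimate map and of the global error function. The two halves are almost identical in structure, so I would set up the group-error case first since the calculations there are pure group multiplication, then translate each line to the state-error case by replacing right multiplication $R$ on the second argument with the action $\phi$.

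First I would check that $E(\hat X, X) = X\hat X^{-1}$ and $e(\hat X,\xi) = \phi(\hat X^{-1},\xi)$ genuinely satisfy Definition~\ref{def:error_function}. For the group error: the partial map $E_{\hat X}: \grpG \to \grpG$, $X \mapsto X\hat X^{-1} = R_{\hat X^{-1}}(X)$, is right translation, hence a diffeomorphism; the partial map $E_X : \grpG \to \grpG$, $\hat X \mapsto X\hat X^{-1}$, is $X \cdot \inv(\hat X)$, a composition of inversion and left translation, hence a diffeomorphism and in particular a submersion. So conditions (i) and (ii) hold. For the state error, $e_{\hat X} = \phi_{\hat X^{-1}}$ is a diffeomorphism because group actions act by diffeomorphisms, and $e_\xi : \hat X \mapsto \phi(\hat X^{-1},\xi) = \phi_\xi \circ \inv$ is a submersion since $\phi_\xi$ is a submersion (transitivity of $\phi$, so $\phi$ a homogeneous-space action) precomposed with the diffeomorphism $\inv$. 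Here I would note the handedness bookkeeping: with $\phi$ a \emph{right} action, $\phi(\hat X^{-1},\phi(\hat X,\xi)) = \phi(\hat X \hat X^{-1},\xi) = \xi$, which is exactly what makes $\phi_{\hat X^{-1}}$ the inverse of $\phi_{\hat X}$.

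Next, consistency. For the group error the observer state estimate map onto the lifted state space is the identity $\Phi_0 = \id : \grpG \to \grpG$, so $E(\hat X, \Phi_0(\hat X)) = E(\hat X,\hat X) = \hat X \hat X^{-1} = \Id$, constant; take $\mr e = \Id$. For the state error the estimate map is $\Phi_0 = \phi_{\mr\xi} : \grpG \to \calM$, $\hat X \mapsto \phi(\hat X,\mr\xi)$, so $e(\hat X,\Phi_0(\hat X)) = \phi(\hat X^{-1},\phi(\hat X,\mr\xi)) = \phi(\hat X\hat X^{-1},\mr\xi) = \phi(\Id,\mr\xi) = \mr\xi$, constant; take $\mr e = \mr\xi$.

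Finally, invariance. For the group error, the relevant actions are $\phi' = R$ on the observer state and $R$ on the lifted system state, so I must check $E(R_A(\hat X), R_A(X)) = E(\hat X, X)$ for all $A \in \grpG$; compute $R_A(X)(R_A(\hat X))^{-1} = (XA)(\hat X A)^{-1} = XA A^{-1}\hat X^{-1} = X\hat X^{-1}$, done. For the state error, the actions are $R$ on the observer state and $\phi$ on the system state, so I check $e(R_A(\hat X),\phi_A(\xi)) = \phi((\hat X A)^{-1},\phi(A,\xi)) = \phi(A^{-1}\hat X^{-1},\phi(A,\xi)) = \phi(A A^{-1}\hat X^{-1},\xi) = \phi(\hat X^{-1},\xi) = e(\hat X,\xi)$, using the right-action composition law $\phi(B,\phi(A,\xi)) = \phi(AB,\xi)$ with $B = A^{-1}\hat X^{-1}$. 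There is no serious obstacle here; the only thing to be careful about is applying the right-action identity in the correct order (it is easy to slip and write $\phi(BA,\xi)$), and matching the claimed state-space actions in Definition~\ref{def:error_invariant} — i.e., confirming that the pair $(R,\phi)$ (resp.\ $(R,R)$) is the pair appearing in \eqref{eq:error_gen_invariance}. That bookkeeping is the one place I would slow down and write every subscript explicitly.
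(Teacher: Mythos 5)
Your proposal is correct and follows essentially the same route as the paper's proof: the same verification that the partial maps are diffeomorphisms/submersions, the same consistency computations with $\mr{e}=\mr{\xi}$, $\Phi_0=\phi_{\mr{\xi}}$ (resp.\ $\mr{E}=\Id$, $\Phi_0=\id$), and the same direct invariance calculations using the right-action composition law. Your treatment of condition (ii), identifying $e_\xi = \phi_\xi\circ\inv$ and invoking that orbit maps of transitive actions are submersions, is just a slightly more explicit version of the paper's one-line appeal to transitivity.
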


\begin{proof}
Since $\phi_{\hat{X}^{-1}} : \calM \to \calM$ and $R_{\hat{X}^{-1}}: \grpG \to \grpG$ are diffeomorphisms then the proposed errors satisfy \eqref{eq:error_represent}.
Since both actions are transitive they satisfy \eqref{eq:error_submersive}.

Choose $\hat{X}$ arbitrarily and evaluate
\begin{align}
e(\hat{X},\phi_{\mr{\xi}}(\hat{X}))
= \phi(\hat{X}^{-1},\phi_{\hat{X}}(\mr{\xi}))
= \phi(\hat{X} \hat{X}^{-1},\mr{\xi}) = \mr{\xi}. \label{eq:state_error_E}
\end{align}
Thus, \eqref{eq:e} satisfies \eqref{eq:error_consistent} with $\mr{e} = \mr{\xi}$ and $\Phi_0=\phi_{\mr{\xi}}$.
Similarly, $E(\hat{X},\hat{X}) = \hat{X} \hat{X}^{-1} = \Id$ and \eqref{eq:E} satisfies \eqref{eq:error_consistent} with $\mr{E} = \Id$ and $\Phi_0=\id$.

Invariance of the state error is seen by computing
\begin{align}
e(R_Z \hat{X}, \phi_Z(\xi)) & = e( \hat{X}Z , \phi_Z(\xi)) 
 = \phi ( (\hat{X}Z)^{-1},\phi_Z(\xi))  
 = \phi (\hat{X}^{-1},\xi) = e( \hat{X}, \xi).
\end{align}
Invariance of the group error is seen by computing
\[
E(R_A (\hat{X}), R_A(X)) = X A (\hat{X} A)^{-1} = X A A^{-1} \hat{X}^{-1} = X \hat{X}^{-1} = E(\hat{X}, X).
\]
\end{proof}

It is reasonable to question whether the two errors proposed in Definition \ref{def:error} are the only invariant error choices.
In particular, it would be desirable to find an error $e_\diamond : \calM \times \calM \to \calM$ defined directly on the state space that has an invariance property $e_\diamond(\hat{\xi}, \xi) = e_\diamond(\phi_A (\hat{\xi}), \phi_A(\xi))$.
Such an error would allow development of an observer directly on the state space $\calM$ and overcome many of the complexities associated with lifting the system onto the symmetry group and posing the observer on a higher dimensional state.
However, the following theorem demonstrates that no such error exists
unless $\calM$ and the symmetry group are diffeomorphic.

\begin{theorem}\label{th:invariant_error}
Consider a general observer for a system on $\calM$ expressed as a dynamical system on a general manifold $\calG$ with state estimate given by a function $\Phi_0 : \calG \to \calM$.
Let $e : \calG \times \calM \to \calM$ be a consistent global error function (Def.~\ref{def:error_function}).
Let $\phi : \grpG \times \calM \to \calM$ be an effective and transitive group action from a symmetry group $\grpG$ to $\calM$ (Def.~\ref{def:symmetry_group}).
Let $\phi': \grpG \times \calG \to \calG$ be a transitive group action on the observer state space.
Then the error $e$ is invariant \eqref{eq:error_gen_invariance}
only if $\calG \equiv \grpG$ are diffeomorphic.
\end{theorem}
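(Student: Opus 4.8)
The plan is to show that the invariance hypothesis forces the observer-space action $\phi'$ to be \emph{free}; since $\phi'$ is already assumed transitive, the orbit map of $\phi'$ then supplies the required diffeomorphism $\grpG \to \calG$. Thus the whole argument reduces to a single use of injectivity of the partial maps $e_{\hat X}$ together with effectiveness of $\phi$.

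First I would fix a reference point $\hat X_0 \in \calG$ and examine its stabiliser $\stab_{\phi'}(\hat X_0) \subset \grpG$. For any $A \in \stab_{\phi'}(\hat X_0)$ we have $\phi'_A(\hat X_0) = \hat X_0$, so evaluating the invariance identity \eqref{eq:error_gen_invariance} at $\hat X = \hat X_0$ gives, for every $\xi \in \calM$,
\[
e(\hat X_0, \xi) = e(\phi'_A(\hat X_0), \phi_A(\xi)) = e(\hat X_0, \phi_A(\xi)),
\]
that is, $e_{\hat X_0}(\xi) = e_{\hat X_0}(\phi_A(\xi))$. By Definition~\ref{def:error_function}(i) the partial map $e_{\hat X_0} : \calM \to \calM$ is a diffeomorphism and in particular injective, whence $\phi_A(\xi) = \xi$ for all $\xi \in \calM$. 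Since $\phi$ is effective, this yields $A = \Id$. Hence $\stab_{\phi'}(\hat X_0)$ is trivial and, $\hat X_0$ being arbitrary, $\phi'$ is a free action.

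It then remains only to combine freeness with transitivity. The orbit map $\theta : \grpG \to \calG$, $\theta(A) := \phi'_A(\hat X_0)$, is surjective because $\phi'$ is transitive and injective because $\phi'$ is free: if $\phi'_A(\hat X_0) = \phi'_B(\hat X_0)$ then, for a right action, $AB^{-1} \in \stab_{\phi'}(\hat X_0) = \{\Id\}$, so $A = B$. By the standard identification of a homogeneous space of a Lie group with the quotient by a point stabiliser \cite{Boo86}, $\theta$ descends to a diffeomorphism from that quotient onto $\calG$; with the stabiliser trivial this is a diffeomorphism $\grpG \to \calG$, which is exactly the asserted conclusion $\calG \equiv \grpG$.

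The argument is short, and the only genuinely non-routine point is the final step: passing from a free transitive smooth action to an honest diffeomorphism needs the orbit–stabiliser theorem for Lie group actions (equivalently, that the natural smooth bijection $\grpG/\stab_{\phi'}(\hat X_0) \to \calG$ is a diffeomorphism), which I would cite rather than reprove. It is worth remarking that consistency of the error and the submersion property of the maps $e_\xi$ play no role in this direction; the structural inputs actually used are injectivity of each $e_{\hat X}$, effectiveness and transitivity of $\phi$, and transitivity of $\phi'$.
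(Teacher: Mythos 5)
Your proof is correct and follows essentially the same route as the paper's: apply the invariance identity to a stabiliser element of the observer-space action, use injectivity of the partial map $e_{\hat X}$ together with effectiveness of $\phi$ to conclude the stabiliser is trivial, and then invoke transitivity of $\phi'$ to obtain the diffeomorphism $\grpG \to \calG$ via the orbit map. Your added remark that the final step rests on the orbit--stabiliser theorem for Lie group actions is a slightly more careful justification of what the paper states in one line, but it is the same argument.
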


\begin{proof}
Choose arbitrary $\hat{X} \in \calG$ and $\xi \in \calM$.
We assume that the error is invariant \eqref{eq:error_gen_invariance} and prove $\calG \equiv \grpG$.
Choose $Z \in \stab_{\phi'} (\hat{X})$ and note that
\begin{align*}
e(\hat{X},\xi) & = e(\phi'_Z (\hat{X}), \xi) 
 = e(\phi'_{Z^{-1}}  (\phi'_Z (\hat{X}) ), \phi_{Z^{-1}} (\xi)) 
 = e(\hat{X}, \phi_{Z^{-1}} (\xi)).
\end{align*}
Since the partial maps $e_{\hat{X}}$ are diffeomorphisms then  $\xi = \phi_{Z^{-1}} (\xi)$.
Since $\phi$ is effective and $\xi$ is arbitrary it follows that $Z = \Id$ and hence $\stab_{\phi'}(\hat{X}) = \{\Id\}$ is trivial.
The result follows since $\phi'$ is transitive on $\calG$ and $\phi'_{\hat{X}}$ defines a diffeomorphism from $\grpG$ to $\calG$.
\end{proof}

If the natural symmetry $\grpG$ of a homogeneous space $\calM$ has a non-trivial stabilizer, and as long as the action is effective, there is no invariant error $e : \calM \times \calM \to \calM$.
If the action is not effective initially then by factoring out by the normal subgroup associated with the non-effective part of the action one obtains a new quotient group on which the induced action on $\calM$ is well defined and effective and the theorem will apply to this case.
That is, the classical observer and filter construction, where the observer state $\hat{\xi} \in \calM$ is chosen \emph{a-priori} as a copy of the system state, can never be analysed using an invariant error.
The only exception is the special case where the symmetry group acts freely (with trivial stabilizer) and  hence $\grpG$ is diffeomorphic to $\calM$.
Theorem \ref{th:invariant_error} strongly motivates the observer architecture
given in Definition \ref{def:observer_architecture}.

\subsection{Observer Design}\label{sub:observer_design}

The proposed approach for observer design is to compute the error dynamics and design an observer to stabilize these dynamics.
The resulting innovation function is applied in the observer state equation and the state estimate is generated by the observer output equation.

\begin{lemma}
Consider a kinematic model with homogeneous state (Def.~\ref{def:symmetry_group}) and lifted model
\eqref{eq:lifted_system}.
Consider the observer \eqref{eq:basic_observer} for initial condition $\hat{X}(0) \in \grpG$.
The error dynamics (Def. \ref{def:error}) are given by
\begin{align}
\dot{e} & = \td \phi_{e} \Ad_{\hat{X}}\left(\Lambda(\xi(t),v(t)) - \Lambda(\phi_{\mr{\xi}}(\hat{X}(t)),v(t)) \right)
-  \td \phi_{e} \Delta_t(\hat{X},y)\label{eq:e_dot}
\end{align}
\end{lemma}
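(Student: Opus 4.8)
The plan is to differentiate the intrinsic error $e(t)=\phi(\hat X(t)^{-1},\xi(t))$ directly along trajectories, splitting $\ddt e$ by the chain rule into the contribution of $\dot\xi$ (with $\hat X$ held fixed) and the contribution of $\dot{\hat X}$ (with $\xi$ held fixed), and then expressing both contributions through the single differential $\td\phi_e:\gothg\to\tT_e\calM$, which is what makes the two pieces combine.

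For the $\xi$-contribution, holding $\hat X$ fixed gives $\td\phi_{\hat X^{-1}}\dot\xi=\td\phi_{\hat X^{-1}}f(\xi,v)$ since $\xi(t)$ solves \eqref{eq:ODE_xi}. I would then substitute the lift identity \eqref{eq:project_Lambda}, $f(\xi,v)=\td\phi_\xi\Lambda(\xi,v)$, obtaining $\td\phi_{\hat X^{-1}}\td\phi_\xi\Lambda(\xi,v)$, and apply Proposition~\ref{prop:homog_vel_transf} with origin point $\xi$, group element $\hat X^{-1}$ and $U=\Lambda(\xi,v)$. This yields $\td\phi_{\hat X^{-1}}\td\phi_\xi\Lambda(\xi,v)=\td\phi_{\phi_{\hat X^{-1}}(\xi)}\Ad_{\hat X}\Lambda(\xi,v)=\td\phi_e\,\Ad_{\hat X}\Lambda(\xi,v)$, the first term of \eqref{eq:e_dot}.

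For the $\hat X$-contribution, write $A:=\hat X^{-1}$; the term is $\td\phi_\xi|_A(\ddt A)$. Using the inversion-map differential, $\ddt A=-\td R_{\hat X^{-1}}\td L_{\hat X^{-1}}\dot{\hat X}$, and substituting $\dot{\hat X}$ from the observer equation \eqref{eq:basic_observer}, I would simplify using that left and right translations commute, that $\td L_{\hat X^{-1}}\td L_{\hat X}=\id$ and $\td R_{\hat X^{-1}}\td R_{\hat X}=\id$, and that $\td L_{\hat X}\td R_{\hat X^{-1}}$ restricted to $\gothg$ is $\Ad_{\hat X}$ (from \eqref{eq:Ad_X}), to get $\ddt A=-\td L_{\hat X^{-1}}\big(\Ad_{\hat X}\Lambda(\phi_{\mr\xi}(\hat X),v)+\Delta_t(\hat X,y)\big)$. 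Finally I would push this tangent vector at $A$ through $\td\phi_\xi$ using the right-action composition identity $\phi(A\exp(sW),\xi)=\phi(\exp(sW),\phi(A,\xi))=\phi_e(\exp(sW))$, whose derivative at $s=0$ gives $\td\phi_\xi|_A\circ\td L_A=\td\phi_e$ on $\gothg$. Hence the $\hat X$-contribution equals $-\td\phi_e\Ad_{\hat X}\Lambda(\phi_{\mr\xi}(\hat X),v)-\td\phi_e\Delta_t(\hat X,y)$, and adding the two contributions and factoring $\td\phi_e\Ad_{\hat X}$ out of the two $\Lambda$-terms produces exactly \eqref{eq:e_dot}.

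The genuinely structural step is the use of Proposition~\ref{prop:homog_vel_transf} to convert $\td\phi_{\hat X^{-1}}$ acting on the true-state lift into the conjugation $\Ad_{\hat X}$, so that the true-state and estimated-state lifts sit in a common tangent space; the rest is bookkeeping, where the main risk is a sign slip or an $L\leftrightarrow R$ swap caused by the right-handedness of $\phi$ together with the right-invariant innovation term $\td R_{\hat X}\Delta_t$ in \eqref{eq:basic_observer}. An alternative I would keep in reserve, cleaner conceptually but importing the lift, is to choose a lifted trajectory $X(t)$ with $\xi=\phi_{\mr\xi}(X)$ (Lemma~\ref{lem:project_system}), note that then $e=\phi(\hat X^{-1},\phi(X,\mr\xi))=\phi_{\mr\xi}(X\hat X^{-1})=\phi_{\mr\xi}(E)$ for the group error $E$ of \eqref{eq:E}, differentiate $E$ in the group, and push down by $\td\phi_{\mr\xi}$.
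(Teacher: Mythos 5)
Your proposal is correct, and your primary route is genuinely different from the paper's: the paper first introduces a lifted true trajectory $X(t)$ with $\phi(X(0),\mr{\xi})=\xi(0)$ (via Lemma~\ref{lem:project_system}), computes the group-error dynamics $\dot{E}$ for $E=X\hat{X}^{-1}$ on $\grpG$ by the product rule, and only then pushes down through $e=\phi_{\mr{\xi}}(E)$ — exactly the alternative you keep ``in reserve.'' Your main argument instead differentiates $e=\phi(\hat{X}^{-1},\xi)$ directly on $\calM$, splitting by the chain rule and using Proposition~\ref{prop:homog_vel_transf} (applied with base point $\xi$ and group element $\hat{X}^{-1}$) to produce the $\Ad_{\hat{X}}$ conjugation on the true-state term, and the inversion differential plus the identity $\td\phi_\xi\circ\td L_{\hat{X}^{-1}}=\td\phi_e$ on $\gothg$ (valid for a right action) to handle the observer term; I checked the signs and the $\Ad_{\hat{X}}$ factors and they come out right, including the fact that the innovation term $\td R_{\hat{X}}\Delta_t$ passes through without an $\Ad$. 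The trade-off: the paper's route needs the existence of the lifted trajectory but yields the group-error equation \eqref{eq:E_dot} as a useful by-product (it is the natural object for the subsequent equivariance argument in \S\ref{sub:observer_design}), whereas your route is self-contained on $\calM$, avoids introducing $X(t)$, and makes the geometric origin of the $\Ad_{\hat{X}}$ factor — Proposition~\ref{prop:homog_vel_transf} — completely explicit.
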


\begin{proof}
Consider the group error $E$ \eqref{eq:E} for the lifted kinematics of the system \eqref{eq:lifted_system} with an initial condition $X(0)$ such that $\phi(X(0),\mr{\xi}) = \xi(0)$.
Direct computation yields
\begin{align}
\ddt E(t) & = \ddt X \hat{X}^{-1}  = \dot{X} \hat{X}^{-1} - X \hat{X}^{-1}\dot{\hat{X}}\hat{X}^{-1} \notag \\
& = X \Lambda(\phi_{\mr{\xi}} (X), v) \hat{X}^{-1}
- X \hat{X}^{-1}
\left(  \hat{X} \Lambda(\phi_{\mr{\xi}} (\hat{X}), v)
+ \Delta_t(\hat{X},y) \hat{X} \right)\hat{X}^{-1} \notag  \\
& = X \hat{X}^{-1} \Ad_{\hat{X}} \Lambda(\phi_{\mr{\xi}} (X), v)
- X \hat{X}^{-1}
\Ad_{\hat{X}} \Lambda(\phi_{\mr{\xi}} (\hat{X}), v)
- X \hat{X}^{-1} \Delta_t(\hat{X},y)  \notag  \\
& = E \Ad_{\hat{X}} \left( \Lambda(\phi_{\mr{\xi}} (X), v)
- \Lambda(\phi_{\mr{\xi}} (\hat{X}), v) \right)
- E\Delta_t(\hat{X},y).  \label{eq:E_dot}
\end{align}

Note that
\begin{align*}
e(\hat{X},\xi) = \phi(\hat{X}^{-1},\xi) = \phi(\hat{X}^{-1},\phi(X,\mr{\xi})) =
\phi(X\hat{X}^{-1},\mr{\xi}) = \phi(E,\mr{\xi}).
\end{align*}
Thus, the error dynamics of $e$ are
\begin{align*}
\ddt e(t) & = \ddt \phi_{\mr{\xi}} (E)  = \td \phi_{\mr{\xi}} \dot{E} 
 = \td \phi_{\mr{\xi}} \td L_{E}
\left[ \Ad_{\hat{X}} \left( \Lambda(\phi_{\mr{\xi}} (X), v)
- \Lambda(\phi_{\mr{\xi}} (\hat{X}), v) \right)
- \Delta_t(\hat{X},y) \right] \notag \\
& = \td \phi_{e}\Ad_{\hat{X}} \left( \Lambda(\phi_{\mr{\xi}} (X), v)
- \Lambda(\phi_{\mr{\xi}} (\hat{X}), v) \right)
- \td \phi_{e} \Delta_t(\hat{X},y).
\end{align*}
\end{proof}

The goal of the observer design is to find an innovation function $\Delta_t : \grpG \times \calN \rightarrow \gothg$ that asymptotically stabilises the error dynamics \eqref{eq:e_dot}, that is drives $e \rightarrow \mr{\xi}$.
On examination it is clear that this is still not a simple problem in its general form.
The error dynamics \eqref{eq:e_dot} can be written as a function of the error $e$ and known variables $\hat{X}$, $v$ and $y$
\[
\ddt e(t) = \td \phi_{e}\Ad_{\hat{X}} \left(\Lambda(\phi_{\hat{X}}(e),v) -  \Lambda(\phi_{\mr{\xi}} (\hat{X}), v) \right)
- \td \phi_{e} \Delta_t(\hat{X},y)
\]
by exploiting the definition of $e = \phi_{\hat{X}^{-1}}(\xi)$.
The explicit dependence of the error dynamics on the observer state $\hat{X}$ as well as the exogenous input $v(t)$, as well as the inherent non-linear dependence on the error $e$, makes this form of error dynamics highly coupled.

This is the point where the equivariant input extension plays a key role in the observer design.
Let $\calf : \calV \to \gothX(\calM)$ be an equivariant extension of the linear system function $f : \vecV \to \gothX(\calM)$.
Let $\Lambda : \calM \times \calV \to \gothg$ denote an equivariant lift for $\calf$ (Def.~\ref{def:equivarian_lift}).
It follows from \eqref{eq:equivariant_infinitesimal_lift} that
\begin{align}
\ddt e(t)
& = \td \phi_{e}\Ad_{\hat{X}} \left(\Lambda(\phi_{\hat{X}}(e),v) -  \Lambda(\phi_{\mr{\xi}} (\hat{X}), v) \right)
- \td \phi_{e} \Delta_t(\hat{X},y) \notag \\
& = \td \phi_{e} \left(\Lambda(\phi_{\hat{X}^{-1}} \phi_{\hat{X}}(e),\psi_{\hat{X}^{-1}}(v)) -  \Lambda(\phi_{\hat{X}^{-1}} \phi_{\mr{\xi}} (\hat{X}), \psi_{\hat{X}^{-1}}(v)) \right)
- \td \phi_{e} \Delta_t(\hat{X},y) \notag \\
& = \td \phi_{e} \left(\Lambda(e,\psi_{\hat{X}^{-1}}(v)) -  \Lambda(\mr{\xi}, \psi_{\hat{X}^{-1}}(v)) \right)
- \td \phi_{e} \Delta_t(\hat{X},y). \label{eq:equivariant_e_dot}
\end{align}
The formal structure of this design problem is now highly tractable.
The error dynamics depend only on the error $e$ along with an exogenous input $w(t) = \psi_{\hat{X}^{-1}}(v(t))$.
Although this input depends in turn on the observer state it can be viewed as an external input from the point of view of analysing the error dynamics.
Note that the input $w(t) = \psi_{\hat{X}^{-1}}(v(t))$ depends on the equivariant input extension and this construction is not possible without the theory developed in \S\ref{sec:equivariant_representation} in general.
An example of this process applied to a real world system is available in our recent work \cite{2020_Vangoor_cdc}.

In summary, the observer design problem for a kinematic model (Def.~\ref{def:kinematic_model}) with homogeneous state (Def.~\ref{def:symmetry_group}) can be tackled by an observer with architecture given by Definition~ \ref{def:observer_architecture}.
The innovation function $\Delta_t$ is designed to stabilise the invariant error \eqref{eq:e} dynamics
\begin{align}
\dot{e} = \td \phi_e \left( \Lambda(e,\psi_{\hat{X}^{-1}}(v)) - \Lambda(\mr{\xi},\psi_{\hat{X}^{-1}}(v)) \right) - \td \phi_e \Delta_t( \hat{X},h(\phi_{\hat{X}}(e))
\label{eq:equivariant_error_dynamics}
\end{align}
where $y = h(\phi_{\hat{X}}(e))$.
The actual choice of innovation function will of course depend on the specific system considered and the preferences of the design engineer.
There are several good methodologies available in the literature for this final step in the design.
The equivariant filter proposed by the authors
\cite{2020_Mahony_mtns,2020_Vangoor_cdc} uses a linearisation of \eqref{eq:equivariant_error_dynamics} along with Kalman-Bucy design principles.
The IEKF \cite{2008_Bonnabel_TAC,2017_Barrau_tac} of Bonnabel \etal is developed for systems on the Lie-group directly and uses an extended Kalman filter framework.
The recent paper \cite{2017_Barrau_tac} characterises the class of group affine systems for which the IEKF is applicable, and on these systems, the error dynamics \eqref{eq:equivariant_error_dynamics} expressed on the Lie-group are of a form where linearisation in the error coordinates is equivalent to linearisation along trajectories of the observer state (the extended Kalman filter perspective).
Constructive Ricatti observers have also been developed for a number of systems \cite{HamSam2017}.
It is often possible to use the structure of specific examples to build tailored nonlinear observer designs using constructive nonlinear design principles \cite{hamel2006,bonnabel2006,2008_Mahony_tac,2011_Hamel,2011_Madgwick,2011_Hua_SE(3),trumpf2012,2012_Grip,2013_Bras,2014_Sanyal,2015_Hua,2017_Berkane_Tayebi_Tac,2019_Hua,2020_Hua}.
This approach leads to observers with large (often almost global) basins of attraction and very high robustness factors but depends on case-by-case design.

\section{Conclusion}\label{sec:conclusion}

In this paper, we have developed the foundational theory of equivariant systems that underlies the design of observers for systems with homogeneous state.
The effort taken to provide a strong systems theoretic development of the system model provides a foundation for future developments of the theory.
The core contribution of the paper lies in the theory of system lifts and equivariant input extensions that allows an equivariant observer design for any kinematic system on a homogeneous space.
We also provide key results on the existence of invariant errors and show how these integrate into observer design for systems with homogeneous state.
In particular, we motivate the choice to pose the observer state on the symmetry group rather than on the system state manifold.
The proposed approach is summarised in \S\ref{sub:design_outline}.

\section*{Acknowledgments}
This research was supported by the Australian Research Council
through the ``Australian Centre of Excellence for Robotic Vision'' CE140100016 and the CNRS trough the the IRP-ARS (Advanced Autonomy for Robotic Systems).



\end{document}